\newcommand{\ffi}{\varphi}
\def\tr{\mathop{\rm Tr}\nolimits}
\newcommand{\ind}{\mbox{ index}}
\def\ran{\mathop{\rm Ran}\nolimits}
\newcommand {\bC}{{\mathbb C}}
\newcommand {\bZ}{{\mathbb Z}}
\newcommand {\bN}{{\mathbb N}}
\newcommand {\bR}{{\mathbb R}}
\newcommand {\bS}{{\mathbb S}}
\newcommand {\bH}{{\mathbb H}}
\newcommand {\bI}{{\mathbb I}}
\newcommand {\hil}{{\mathcal H}}
\newcommand {\cL}{{\mathcal L}}
\newcommand {\cF}{{\mathcal F}}
\newcommand {\cG}{{\mathcal G}}
\newcommand {\cS}{{\mathcal S}}
\newcommand {\bbS}{{\bf S}}
\newcommand {\flux}{{\Phi}}
\newcommand{\evenunder}[1] {\bm{#1}}
\newcommand{\evenover}[1] {\bm{#1}}
 \newcommand {\fatpsi}{\bm{\psi}}
 \newcommand{\fatU}{\bm{U}}
  \newcommand{\fatchi}{\bm{\chi}}
  \newcommand{\fatS}{\bm{\cS}}
\newcommand{\be}{\begin{equation}}
\newcommand{\ee}{\end{equation}}
\newtheorem{thm}{Theorem} [section]
\newtheorem{lem}[thm]{Lemma}
\newtheorem{prop}[thm]{Proposition}
\newtheorem{definition}[thm]{Definition}
\newtheorem{cor}[thm]{Corollary}
\newtheorem {rem}[thm]{Remark}
\newtheorem {rems}[thm]{Remarks}
\title{ Chirality induced Interface Currents in the Chalker Coddington Model
\thanks{ Supported by
FONDECYT 1161732, 
and ECOS-Conicyt C15E10}\  \thanks{Supported by the LabEx PERSYVAL-Lab (ANR-11- LABEX- 0025-01) funded by the French program Investissement d'Avenir}}
\author{Joachim Asch  \thanks{CNRS, CPT, Aix Marseille Universit\'e, Universit\'e de Toulon, Marseille, France, asch@cpt.univ-mrs.fr},
Olivier Bourget
\thanks{
Departamento de Matem\'aticas
Pontificia Universidad Cat\'olica de Chile, Av. Vicu\~{n}a Mackenna 4860,
C.P. 690 44 11, Macul
Santiago, Chile},
Alain Joye
\thanks{
Universit\'e Grenoble Alpes, CNRS Institut Fourier, 38000 Grenoble, France}
}
\date{3/5/18}
\begin{document}
\maketitle

\begin{abstract}

We study transport properties of a Chalker--Coddington type model in the plane which presents asymptotically pure anti-clockwise rotation on the left and clockwise rotation on the right. We prove delocalisation in the sense that the absolutely  continuous spectrum  covers the whole unit circle. The result is of topological nature and independent of the details of the model.
\end{abstract}

\section{Introduction}

By a Chalker--Coddington (aka: CC--model) we understand a unitary operator
\[U_{CC}:\ell^2(\bZ^2;\bC)\to\ell^2(\bZ^2;\bC)\]
defined by a collection of $2\times2$ scattering matrices, i.e.: a map
\begin{equation}\bbS:\bZ^2\to U(2), \qquad (j,k)\mapsto S_{j,2k}\in U(2).\label{eq:S}\end{equation}

Denoting by  $\ket{j,k}$ the canonical basis vectors of $l^2(\bZ^2;\bC)$, $\bbS$ defines $U_{CC}$  according to figure \ref{fig:scatteringnetwork} by:

\begin{align}
&\begin{pmatrix} U_{CC}|2j,2k\rangle \cr U_{CC}|2j+1,2k-1\rangle \end{pmatrix} := S_{2j,2k} \begin{pmatrix} |2j,2k-1\rangle \cr |2j+1,2k\rangle \end{pmatrix},\notag \\
&\begin{pmatrix} U_{CC}|2j+1,2k\rangle \cr U_{CC} |2j+2,2k+1\rangle \end{pmatrix} := S_{2j+1,2k} \begin{pmatrix} |2j+2,2k\rangle \cr |2j+1,2k+1\rangle \end{pmatrix}. \label{def:UCC}
\end{align}
\begin{figure}[hbt]
\centerline {
\includegraphics[width=8cm]{./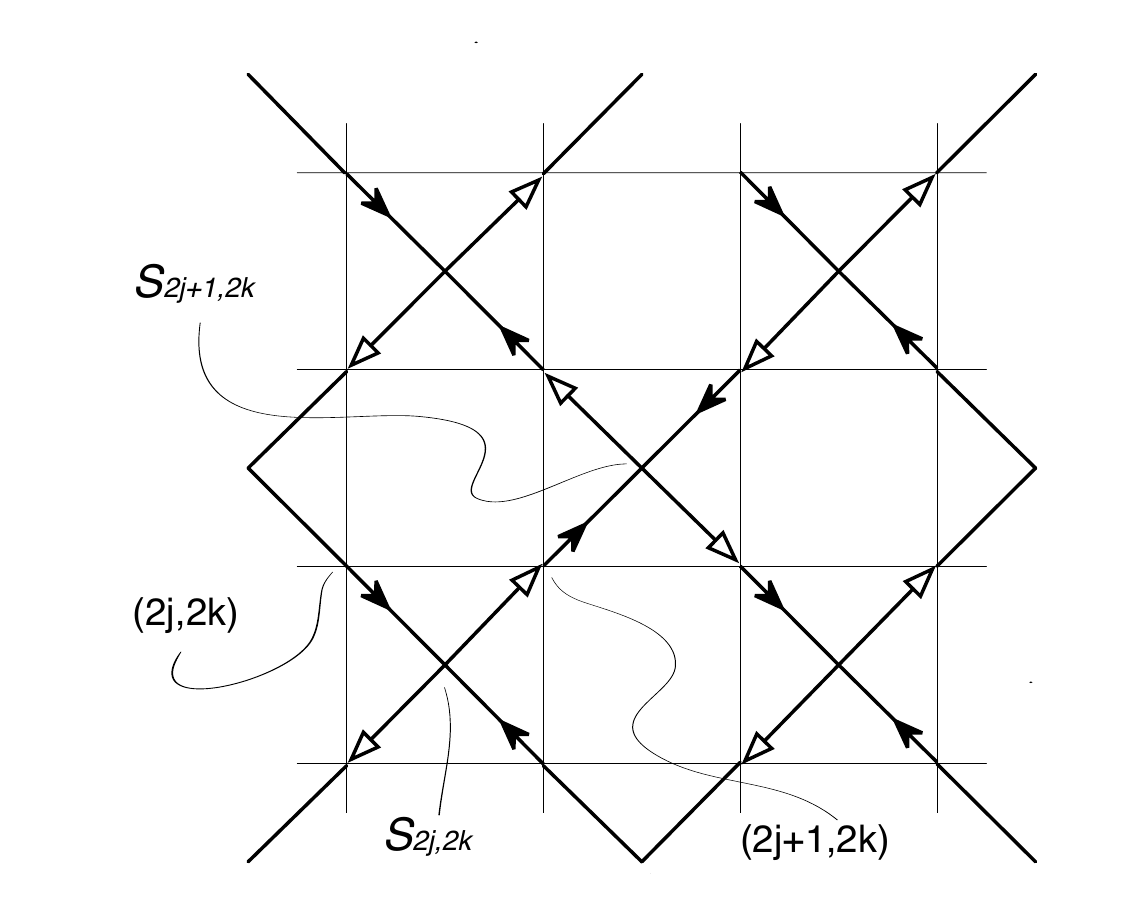}
}
\caption{A Chalker--Coddington model with its incoming (solid arrows) and outgoing links}
\label{fig:scatteringnetwork}
\end{figure}
According to the parity of the first index we will speak of odd and even scattering matrices $S_{j,2k}$.

The model provides an effective description of one time step of  the motion of an electron in a plane subject to a strong perpendicular magnetic field and electric potential whose main physical characteristics are encoded by the scattering matrices.  For details on its  known physical and mathematical features, we refer to  \cite{cc, kok, ABJ1, ABJ2};  here we  just mention that the great interest of this effective model stems from its ability to describe the delocalisation transition of the Quantum Hall effect.

If all scattering matrices $S_{j,2k}$ are off--diagonal then the motion is an  anti-clockwise rotation on four dimensional subspaces (aka: plaquettes); if they are all diagonal the motion is clockwise on (different) plaquettes. It is known that certain random perturbations of these cases display dynamical localisation  \cite{ABJ2}. The critical case, in the sense of stable delocalisation, is supposed to occur when all matrix entries are of modulus $1/\sqrt{2}$,  \cite{cc}.  On the other hand it is known that the translation invariant critical case has trivial Chern numbers \cite{F}.  In the present contribution, we investigate the spectral and transport properties on an interface made of arbitrary scattering matrices between two phases of different chirality. In particular we suppose that the motion is anti-clockwise in a left half-plane and clockwise in a right half-plane as depicted in figure \ref{fig:strip}.

\begin{figure}[hbt]
\centerline {
\includegraphics[height=6cm]{./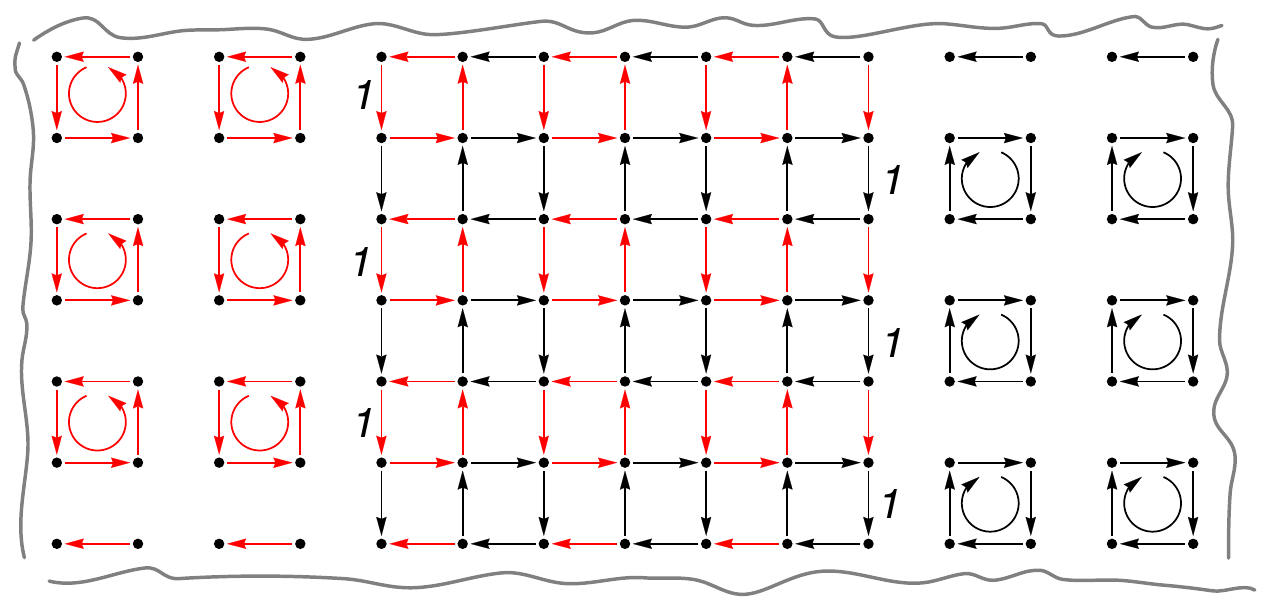}
}
\caption{The invariant strip; the indicated transition {probabilities} $1$ on the boundaries are imposed by the different chiralities}
\label{fig:strip}
\end{figure}

We  observe that the clockwise and anti-clockwise phases induce boundary conditions on the interface making it an  invariant strip under $U_{CC}$. Then we consider a natural flux observable in configuration space and prove that the spectrum of $U_{CC}$ restricted to the interface contains a non-trivial absolutely continuous component that covers the whole circle. This is independent of the details inside the interface. Our argument makes use of a topological quantity which comes as the index of a pair of projections. This has been used before for projections in energy space, in particular for the proof that the quantum Hall conductance is an index \cite{ASS,ASS1} and for the equivalence of Bulk and Edge Conductance \cite{elbaugraf,grafPorta}. The present point of view is in configuration space and inspired by \cite{ki}. Implications concerning the absolutely continuous spectrum seem to be new. 
For the sake of comparison, we recall that CC models with certain types of random scattering matrices, restricted to a strip with periodic boundary conditions,  display dynamical localisation  \cite{ABJ1}.  

{Bulk-Edge correspondence for unitary network models from a Floquet point of view  is discussed in \cite{kbrd, rlbl, dft, graftauber, ssb}. 
An ideologically similar situation occurs in the Iwatsuka model, see \cite{iwatsuka,mmp,dgr}.}

In the special case where the CC model is invariant under translations parallel to the interface, we show that the continuous spectrum of its restriction to the interface is purely absolutely continuous.  On the other hand, showing the presence of absolutely continuous spectrum by Mourre's method requires more information on the model \cite{ABJ3}.
We remark that the present results can adapted for two-dimensional coined  Quantum Walks (see {\it e.g.} \cite{ABJ3}) which we will do elsewhere.

\section{Properties of the model}

Next we discuss some basic properties of the model: its symmetry under parity, the special cases of (anti-)clockwise motion and the continuous dependence on the defining scattering matrices and the occurence of an invariant strip interface.

\begin{lem}
i) For any collection $\{S_{j,2k}\in U(2)\ \mbox{s.t. } (j,k)\in\bZ^2\}$,
 $U_{CC}$ is unitary.

ii) Let $\mathcal I$ be the involutive unitary parity operator  defined by  ${\mathcal I} |j,k\rangle  =(-1)^{j+k}|j,k\rangle$. Then 
$${\mathcal I}U_{CC}{\mathcal I}=-U_{CC} \ \Rightarrow \ \sigma(U_{CC})=-\sigma(U_{CC}).$$

iii) The adjoint $U^*_{CC}$ reads
\begin{align*}
&\begin{pmatrix}U_{CC}^* |2j,2k-1\rangle \cr U_{CC}^* |2j+1,2k\rangle \end{pmatrix} = S_{2j,2k}^* \begin{pmatrix} |2j,2k\rangle \cr |2j+1,2k-1\rangle \end{pmatrix}, \\
&\begin{pmatrix}U_{CC}^* |2j+2,2k\rangle \cr U_{CC}^* |2j+1,2k+1\rangle \end{pmatrix} = S_{2j+1,2k}^* \begin{pmatrix} |2j+1,2k\rangle \cr |2j+2,2k+1\rangle \end{pmatrix}. 
\end{align*}

\end{lem}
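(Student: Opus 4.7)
\emph{Part (i).} The definition (\ref{def:UCC}) specifies $U_{CC}$ by its action on pairs of canonical basis vectors. My plan is to verify that both the collection of input pairs and the collection of output pairs of the $S_{j,2k}$'s form a partition of $\{|j,k\rangle : (j,k)\in\bZ^2\}$, after which the unitarity of each $S_{j,2k}$ forces $U_{CC}$ to be unitary on each $2$-dimensional invariant block, hence globally. The bookkeeping is by parity classes: the even matrix $S_{2j,2k}$ takes the (even,odd) site $|2j,2k-1\rangle$ and the (odd,even) site $|2j+1,2k\rangle$ into the (even,even) site $|2j,2k\rangle$ and the (odd,odd) site $|2j+1,2k-1\rangle$; the odd matrix $S_{2j+1,2k}$ takes the (even,even) site $|2j+2,2k\rangle$ and the (odd,odd) site $|2j+1,2k+1\rangle$ into the (odd,even) site $|2j+1,2k\rangle$ and the (even,odd) site $|2j+2,2k+1\rangle$. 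A direct check of each parity class then shows that every basis vector appears exactly once as an input and exactly once as an output. This is the most tedious step, but it is entirely combinatorial.

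\emph{Part (ii).} The parity operator $\mathcal I$ is diagonal in the canonical basis with eigenvalues $(-1)^{j+k}$, so $\mathcal I U_{CC}\mathcal I\,|j,k\rangle=(-1)^{j+k}\mathcal I U_{CC}|j,k\rangle$, and it suffices to check that $U_{CC}|j,k\rangle$ is supported on sites of parity opposite to $j+k$. This follows from the parity labelling of (i): every input of a scattering matrix is paired with outputs of opposite $(j+k)$-parity. For example, the inputs $|2j,2k-1\rangle$ and $|2j+1,2k\rangle$ of $S_{2j,2k}$ both have odd parity while the outputs $|2j,2k\rangle$ and $|2j+1,2k-1\rangle$ both have even parity, and similarly for odd $S$. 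This yields $\mathcal I U_{CC}\mathcal I=-U_{CC}$. The spectral consequence is then immediate: since $\mathcal I$ is unitary, $\sigma(U_{CC})=\sigma(\mathcal I U_{CC}\mathcal I)=\sigma(-U_{CC})=-\sigma(U_{CC})$.

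\emph{Part (iii).} Each $S_{j,2k}$ permutes a $2$-dimensional block of basis vectors unitarily, so $U_{CC}^*$ acts on the reversed block by $S_{j,2k}^*$. Concretely, from $\binom{U_{CC}|a\rangle}{U_{CC}|b\rangle}=S\binom{|c\rangle}{|d\rangle}$ one reads the matrix coefficients $\langle c|U_{CC}|a\rangle=S_{11}$, etc., and since $|c\rangle,|d\rangle$ appear as outputs of no other scattering matrix (by the partition established in (i)), the only inputs contributing to $U_{CC}^*|c\rangle$ and $U_{CC}^*|d\rangle$ are $|a\rangle$ and $|b\rangle$, giving $\binom{U_{CC}^*|c\rangle}{U_{CC}^*|d\rangle}=S^*\binom{|a\rangle}{|b\rangle}$. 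Applying this block by block to the two identities in (\ref{def:UCC}) yields the stated formulas.

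The main obstacle is really just in (i): one must be careful that the four parity classes of $\bZ^2$ are covered exactly once both on the input and output sides, and that adjacent scattering matrices do not share a site. Once the partition is established, (ii) and (iii) reduce to tracking signs and reading off inverses of $2\times 2$ unitaries.
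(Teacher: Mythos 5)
Your proof is correct and supplies the routine bookkeeping the paper simply omits (the lemma is stated without proof). One small terminological slip in part (i): the $2$-dimensional blocks are not \emph{invariant} under $U_{CC}$ --- each domain pair, e.g.\ $\{|2j,2k\rangle,|2j+1,2k-1\rangle\}$ of even total parity, is mapped onto a \emph{different} range pair $\{|2j,2k-1\rangle,|2j+1,2k\rangle\}$ of odd total parity --- but your argument does not actually use invariance; it relies only on what you establish, namely that the domain pairs and the range pairs each partition the canonical basis, so $U_{CC}$ is a direct sum of unitary maps between two orthogonal decompositions of $\ell^2(\bZ^2)$, and this same parity-flipping bijective block structure is exactly what drives both (ii) and the computation of $U_{CC}^*$ in (iii).
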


Denote the odd n--sphere by
\[\bS^n:=\left\{z\in\bC^{\frac{n+1}{2}}; \sum_{j} \vert z_j\vert^2 =1\right\}.\]
To parametrize the scattering matrices  $S_{j,2k}$, we use the homeomorphism
\begin{align}
S: \bS^1\times\bS^3\to U(2), \quad (q,(r,t))\mapsto  q\begin{pmatrix} r & -t \cr \overline{t} & \overline{r}\end{pmatrix}. \label{eq:su2}
\end{align}
So $\bbS$ is parametrised by a function $\bZ\times2\bZ\mapsto\bS^1\times\bS^3$, cf. : figure \ref{fig:network}.

\begin{figure}[hbt]
\centerline {
\includegraphics[width=8cm]{./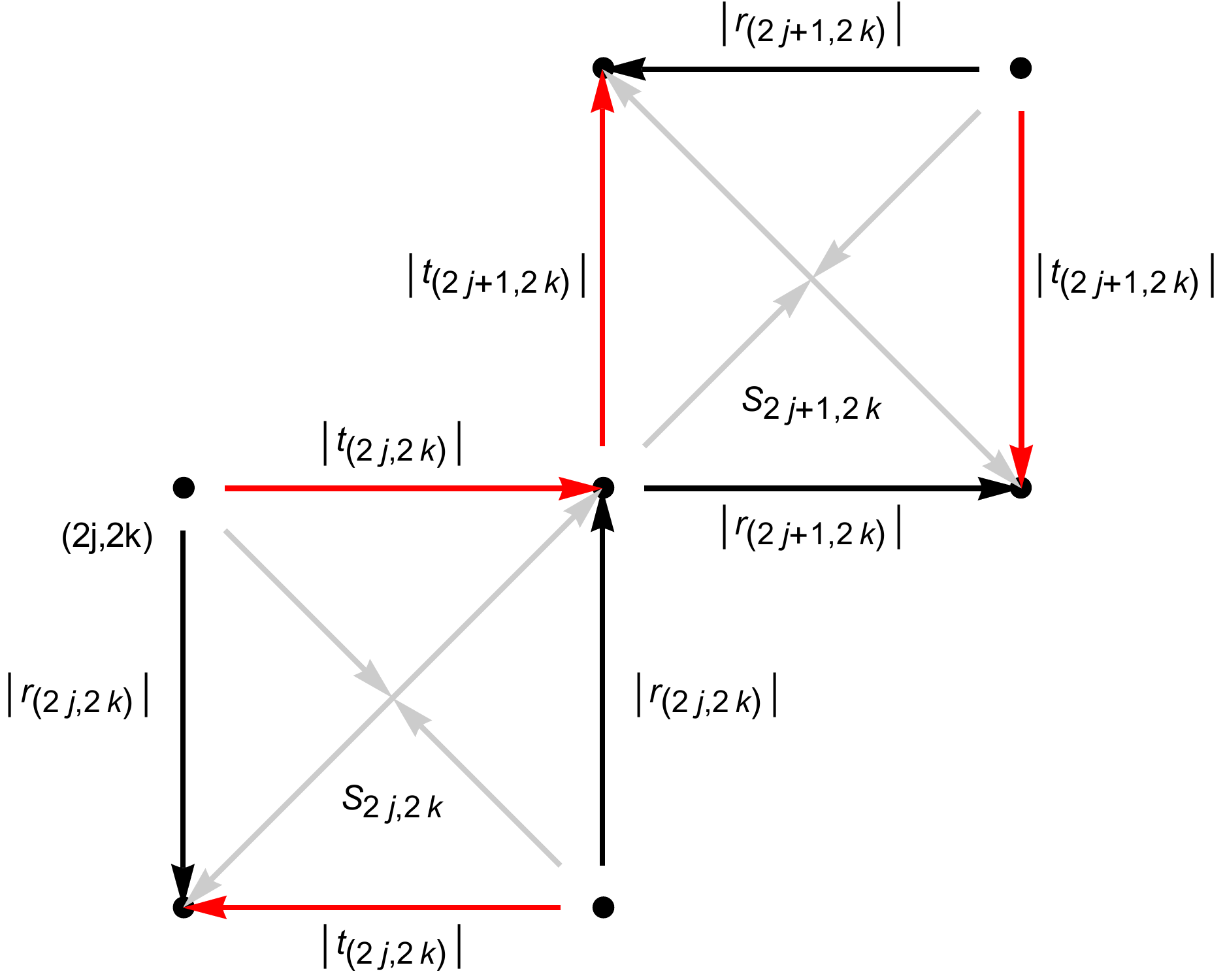}
}
\caption{The labeling of the matrix-elements of $S_{j,2k}$ }
\label{fig:network}
\end{figure}

One gets  the following characterisations of the right and left turning  phases, as well as the corresponding definition of plaquettes.  
\begin{lem} \label{LR}
In case all scattering matrices are diagonal, i.e. $t_{j,2k}=0 \Leftrightarrow  |r_{j,2k}|=1 $, the subspaces
$$\bH_{\circlearrowright}^{j,k}:= \mbox{span}\{|2j,2k\rangle, |2j,2k-1\rangle, |2j-1,2k-1\rangle, |2j-1,2k\rangle \}$$ 
are invariant under $U_{CC}$. The dynamics on those plaquettes is that of right  turners  with representation in the corresponding basis 
$$
U_{CC}|_{\bH_{\circlearrowright}^{j,k}}=\begin{pmatrix} 
0 & 0 & 0 & q_{2j-1, 2k}r_{2j-1, 2k} \cr
q_{2j, 2k}r_{2j,2k} & 0 & 0 & 0 \cr
0 & q_{2j-1, 2k-2}\overline{r_{2j-1,2k-2}}  & 0 & 0 \cr
0 & 0 & q_{2j-2, 2k}\overline{r_{2j-2,2k}}  & 0
\end{pmatrix}
$$
and spectrum $e^{\alpha^R_{2k,2j}}\{1,i, -1, -i\}$, where 
$$e^{4\alpha^R_{2k,2j}}=q_{2j, 2k}r_{2j,2k} q_{2j-1, 2k-2}\overline{r_{2j-1,2k-2}}  q_{2j-2, 2k}\overline{r_{2j-2,2k}}   q_{2j-1, 2k}r_{2j-1, 2k}$$
is the total phase accumulated on the scattering events. 

In case all scattering matrices are off-diagonal, i.e. $r_{j,k}=0 \Leftrightarrow  |t_{j,k}|=1$, the subspaces 
$$\bH_{\circlearrowleft}^{j,k}:= \mbox{span}\{|2j,2k\rangle, |2j+1,2k\rangle, |2j+1,2k+1\rangle, |2j,2k+1\rangle \}$$
 are invariant under $U_{CC}$, and the dynamics on those plaquettes is that of left turners with representation in the corresponding basis 
$$
U_{CC}|_{\bH_{\circlearrowleft}^{j,k}}=\begin{pmatrix} 
0 & 0 & 0 & q_{2j-1, 2k}\overline{t_{2j-1, 2k}} \cr
-q_{2j, 2k}t_{2j,2k} & 0 & 0 & 0 \cr
0 & -q_{2j+1, 2k}t_{2j+1,2k}  & 0 & 0 \cr
0 & 0 & q_{2j, 2k+2}\overline{t_{2j,2k+2}}  & 0
\end{pmatrix}
$$
and spectrum $e^{\alpha^L_{2k,2j}}\{1,i, -1, -i\}$, where 
$$e^{4\alpha^L_{2k,2j}}=q_{2j, 2k}t_{2j,2k} q_{2j+1, 2k}t_{2j+1,2k} q_{2j, 2k+2}\overline{t_{2j,2k+2}}  q_{2j-1, 2k}\overline{t_{2j-1, 2k}}$$
is the total phase accumulated on the scattering events. 
\end{lem}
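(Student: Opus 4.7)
The proof is essentially a bookkeeping exercise: substitute the parametrization \eqref{eq:su2} into the defining equations \eqref{def:UCC} under the degenerate hypothesis, then read off the one-to-one propagation of basis vectors around a plaquette. I will explain the plan for the right-turning (diagonal) case; the left-turning case is entirely analogous.

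First, assume $t_{j,2k}=0$ for all $(j,k)$, so that the parametrization \eqref{eq:su2} reduces to $S_{j,2k}=q_{j,2k}\,\mathrm{diag}(r_{j,2k},\overline{r_{j,2k}})$. Plugging this into the two matrix equations \eqref{def:UCC} one immediately obtains that each line decouples: the first equation applied at $(2j,2k)$ gives
\be
U_{CC}|2j,2k\rangle=q_{2j,2k}r_{2j,2k}|2j,2k-1\rangle,\qquad U_{CC}|2j+1,2k-1\rangle=q_{2j,2k}\overline{r_{2j,2k}}|2j+1,2k\rangle,
\ee
while the second equation applied at $(2j+1,2k)$ gives similar one-term images for $|2j+1,2k\rangle$ and $|2j+2,2k+1\rangle$. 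So $U_{CC}$ acts as a weighted shift on the canonical basis.

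Second, specializing these identities with shifted indices, I would compute explicitly the four images
\[
U_{CC}|2j,2k\rangle,\quad U_{CC}|2j,2k-1\rangle,\quad U_{CC}|2j-1,2k-1\rangle,\quad U_{CC}|2j-1,2k\rangle,
\]
using, respectively, the first equation with label $(2j,2k)$, the second equation with label $(2j-1,2k-2)$, the first equation with label $(2j-2,2k)$, and the second equation with label $(2j-1,2k)$. This produces the closed cycle
\[
|2j,2k\rangle\;\longrightarrow\;|2j,2k-1\rangle\;\longrightarrow\;|2j-1,2k-1\rangle\;\longrightarrow\;|2j-1,2k\rangle\;\longrightarrow\;|2j,2k\rangle,
\]
which makes the subspace $\bH_{\circlearrowright}^{j,k}$ invariant; writing the corresponding matrix in the stated ordered basis reproduces the displayed $4\times 4$ matrix verbatim.

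Third, for the spectrum, I would note that $\left.U_{CC}\right|_{\bH_{\circlearrowright}^{j,k}}$ is a cyclic shift with phase factors; its fourth power is the scalar $e^{4\alpha^R_{2k,2j}}\bI$, where $e^{4\alpha^R_{2k,2j}}$ is the product of the four subdiagonal entries and coincides with the expression in the statement. Hence the eigenvalues are the four fourth roots of this scalar, i.e.\ $e^{\alpha^R_{2k,2j}}\{1,i,-1,-i\}$.

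The off-diagonal case is handled by the same script with $r_{j,2k}=0$, $S_{j,2k}=q_{j,2k}\begin{pmatrix}0 & -t_{j,2k}\\ \overline{t_{j,2k}} & 0\end{pmatrix}$: each line now swaps the two inputs, which reverses the sense of rotation around the four sites $\{|2j,2k\rangle,|2j+1,2k\rangle,|2j+1,2k+1\rangle,|2j,2k+1\rangle\}$ and produces the matrix given for $\bH_{\circlearrowleft}^{j,k}$, with the extra minus signs coming from the sign in the first column of $S$. The spectral computation is identical. I do not anticipate a real obstacle beyond the indexing; the only point requiring care is to pick, for each of the four basis vectors in a plaquette, the unique instance of \eqref{def:UCC} whose left-hand side contains $U_{CC}$ of that vector, so that the image is unambiguously located in the plaquette.
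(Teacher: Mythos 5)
Your plan is correct, and it is exactly the routine index-chasing the paper implicitly relies on (no proof is given in the text): specialise \eqref{eq:su2} to diagonal/off-diagonal form, pick for each of the four plaquette vectors the unique instance of \eqref{def:UCC} whose left-hand side contains it, read off the cyclic weighted shift, and take fourth roots of the product of the four non-zero entries for the spectrum. All four index substitutions you list check out, so there is no gap.
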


Next we will show that the Chalker-Coddington model is uniformly Lipschitz in its defining scattering matrices, (\ref{eq:S}). Denoting the model defined by $\bbS$ by  $U_{CC}\equiv U_{CC}({\bbS})$ we have:
\begin{lem}\label{contCC}
 There exists a $c>0$ such that for all $\bbS, \bbS'$
 $$
 \|U_{CC}({\bbS})-U_{CC}({\bbS'})\|\leq c\|{\bbS}-{\bbS'}\|_\infty.
 $$
\end{lem}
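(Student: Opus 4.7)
The plan is to exploit the fact that $U_{CC}$ has a block structure dictated by the scattering matrices: each $S_{j,2k}$ independently determines the image of a pair of basis vectors, and these pairs partition the canonical basis.

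First I would introduce, for each pair $(j,k)\in\bZ^2$, the two-dimensional input subspace $D_{j,2k}$ and output subspace $R_{j,2k}$ on which $S_{j,2k}$ acts according to (\ref{def:UCC}). Explicitly,
\begin{align*}
D_{2j,2k}&=\mbox{span}\{|2j,2k\rangle,|2j+1,2k-1\rangle\}, & R_{2j,2k}&=\mbox{span}\{|2j,2k-1\rangle,|2j+1,2k\rangle\},\\
D_{2j+1,2k}&=\mbox{span}\{|2j+1,2k\rangle,|2j+2,2k+1\rangle\}, & R_{2j+1,2k}&=\mbox{span}\{|2j+2,2k\rangle,|2j+1,2k+1\rangle\}.
\end{align*}
By definition of $U_{CC}$ one has $U_{CC}D_{j,2k}\subset R_{j,2k}$ and the restriction is, in the natural ordered bases, exactly multiplication by $S_{j,2k}$.

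Second, a bookkeeping check on indices shows that every canonical basis vector $|m,n\rangle$ belongs to exactly one $D_{j,2k}$ and to exactly one $R_{j',2k'}$; this gives the two orthogonal direct-sum decompositions
$$\ell^2(\bZ^2;\bC)=\bigoplus_{(j,k)\in\bZ\times\bZ} D_{j,2k}=\bigoplus_{(j,k)\in\bZ\times\bZ} R_{j,2k}.$$
This is the one point one has to be a bit careful about; it is essentially the combinatorial fact underpinning the unitarity of $U_{CC}$ in part (i) of the preceding lemma, so the verification is routine.

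Third, given two families $\bbS,\bbS'$, for any $\psi\in\ell^2(\bZ^2;\bC)$ write $\psi=\sum_{(j,k)}\psi_{j,2k}$ with $\psi_{j,2k}\in D_{j,2k}$. Since $(U_{CC}(\bbS)-U_{CC}(\bbS'))\psi_{j,2k}\in R_{j,2k}$ and the $R_{j,2k}$ are mutually orthogonal, the Pythagorean identity yields
\begin{align*}
\|(U_{CC}(\bbS)-U_{CC}(\bbS'))\psi\|^2
&=\sum_{(j,k)}\|(S_{j,2k}-S'_{j,2k})\psi_{j,2k}\|^2\\
&\leq\sup_{(j,k)}\|S_{j,2k}-S'_{j,2k}\|^2\sum_{(j,k)}\|\psi_{j,2k}\|^2
=\|\bbS-\bbS'\|_\infty^2\,\|\psi\|^2.
\end{align*}
Hence the bound holds with $c=1$, uniformly in $\bbS,\bbS'$. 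No deep obstacle arises; the only subtlety is the index bookkeeping in step two, and the result is in fact slightly stronger than what is stated, namely an operator-norm contraction with constant one.
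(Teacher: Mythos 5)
Your proof is correct and, at its core, rests on the same block-diagonal structure that the paper tacitly uses; but you make that structure explicit, which the paper does not, and you work directly with the matrix norm of $S_{j,2k}-S'_{j,2k}$ rather than detouring through the $(q,r,t)$ parametrization as the paper does. The payoff is a cleaner constant: your argument gives $c=1$ when $\|\bbS-\bbS'\|_\infty$ is taken as the sup of operator norms, whereas the paper only asserts the Lipschitz bound with an unspecified $c$ coming from $\sqrt{2}$ and a norm equivalence. One small correction: from the defining relations (\ref{def:UCC}), the matrix of the restriction of $U_{CC}$ from $D_{j,2k}$ to $R_{j,2k}$ in the natural ordered bases is $S_{j,2k}^{T}$ (the rows of $S_{j,2k}$ give the expansion coefficients of the images of the input basis vectors, hence its columns come from those rows), not $S_{j,2k}$ itself. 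Since $\|S^{T}-S'^{T}\|=\|S-S'\|$ in any of the relevant unitarily invariant norms, this does not affect your estimate, but it should be stated accurately. Your index bookkeeping in step two checks out: each $|m,n\rangle$ lies in exactly one $D_{j,2k}$ and exactly one $R_{j',2k'}$ according to the parities of $(m,n)$, so both direct-sum decompositions are valid and the Pythagorean argument goes through.
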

\begin{proof} For the homeomorphism defined in (\ref{eq:su2}) it holds for the Hilbert Schmidt norm $\Vert M\Vert_{HS}=\left(\tr\ M^*M\right)^{1/2}$:
\[\Vert S(q,r,t)-S(q',r',t')\Vert_{HS}\le \sqrt{2}\left( \vert q-q'\vert + \left(\vert r-r'\vert^2+\vert t-t'\vert^2\right)^{\frac{1}{2}}\right)\] 
and we get
\[ \| U_{CC}({\bbS})-U_{CC}({\bbS'})\|\leq c \left(\sup_{\mu\in\bZ\times2\bZ}\vert q_\mu-q'_\mu\vert + \left(\vert r_\mu-r'_\mu\vert^2+\vert t_\mu-t'_\mu\vert^2\right)^{\frac{1}{2}}\right)\]
{were $c$ comes from  $\sqrt{2}$ and the equivalence of the euclidian matrix norm and the Hilbert Schmidt norm on the scattering matrices.}
\end{proof}
Remark that $\bS^1\times\bS^3$ is  path-connected so any two Chalker-Coddington models can be continuously deformed into one another in the natural topology.

\subsection{Interface between left and right phases}\label{intersec}

We now consider  the situation where the scattering matrices (\ref{eq:S}) describe a left moving phase and a right moving phase separated by a vertical interface  $I_{n_L, n_R}\subset l^2(\bZ^2)$ which turns out to be invariant:

\begin{lem} Let $\bbS$ in (\ref{eq:S}) be such that for $n_L\leq n_R $
\begin{equation}S_{j,2k} \ \mbox{is}\ \left\{\begin{matrix} \mbox{off-diagonal} & \Leftrightarrow & r_{j,2k}=0& \mbox{if} \ j<n_L \cr 
\mbox{diagonal} &\Leftrightarrow & t_{j,2k}=0  & \mbox {if} \ j\geq n_R\end{matrix} \right.\label{eq:chiralphases}\end{equation}
Denote $\evenunder{n_L}$  the largest even integer less or equal to $n_L$ and $\evenover{n_R}$ the smallest even integer greater or equal to $n_R$.  then
\begin{equation}
\label{interface}
I_{n_L, n_R}:=\ell^2\left(\{\evenunder{n_L}, \evenover{n_R}\}\times\bZ; \bC\right)
\end{equation}
is invariant
 under $U_{CC}(\bbS)$. Define
\[U_{I_{n_L,n_R}} {\rm \ the\ restriction\ of\ } U_{CC} {\rm \ to\ } I_{n_L, n_R}.\]
The chiral boundary condition reads for  any  $k\in\bZ$ 
\begin{eqnarray}
U_{I_{n_L,n_R}} \Ket{\evenunder{n_L},2k+1}&=&p \Ket{\evenunder{n_L},2k}\cr 
U_{I_{n_L,n_R}} \Ket{\evenunder{n_R},2k}&=&p \Ket{\evenover{n_L},2k-1}.\label{eq:cbc}
\end{eqnarray}
where  $p\in\bS^1$ is a site dependent phase factor.
In addition it holds for $n_L$ odd:
\[U_{I_{n_L,n_R}} \Ket{\evenunder{n_L},2k}=p\Ket{\evenunder{n_L}+1,2k}\hbox{ and }
U_{I_{n_L,n_R}} \Ket{\evenunder{n_L+1},2k+1}=p\Ket{\evenunder{n_L},2k+1}%
,\]
 and for $n_R$ odd:
\[U_{I_{n_L,n_R}} \Ket{\evenover{n_R},2k+1}=p\Ket{\evenover{n_R}-1,2k+1}\hbox{ and }
U_{I_{n_L,n_R}} \Ket{\evenover{n_R}-1,2k}=p\Ket{\evenover{n_R},2k}
.\]
\end{lem}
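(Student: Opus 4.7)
The plan is, first, to prove invariance of $I_{n_L,n_R}$ under $U_{CC}$ by splitting $I_{n_L,n_R}^\perp$ into a left and a right half-plane and recognising each as an orthogonal sum of invariant plaquettes from Lemma \ref{LR}: on the left, the left-turner plaquettes $\bH_{\circlearrowleft}^{j,k}$ with $2j\le\evenunder{n_L}-2$ tile the half-plane $\ell^2(\{j<\evenunder{n_L}\}\times\bZ;\bC)$ and only involve scattering matrices with first index $\le\evenunder{n_L}-1<n_L$, which are all off-diagonal by \eqref{eq:chiralphases}; symmetrically on the right with right-turner plaquettes $\bH_{\circlearrowright}^{j,k}$ satisfying $2j\ge\evenover{n_R}+2$, whose defining scattering matrices have first index $\ge n_R$ and are hence diagonal. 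Since $U_{CC}$ is unitary and preserves $I_{n_L,n_R}^\perp$, it then automatically preserves $I_{n_L,n_R}$ as well.

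Second, I would read off the chiral boundary conditions by unfolding \eqref{def:UCC} for the scattering matrices that straddle the two boundaries. At $j=\evenunder{n_L}$, the matrix $S_{\evenunder{n_L}-1,2k}$ is odd-indexed and off-diagonal (since $\evenunder{n_L}-1<n_L$), so it enters the second line of \eqref{def:UCC}, coupling the outputs $U_{CC}\ket{\evenunder{n_L}-1,2k}$, $U_{CC}\ket{\evenunder{n_L},2k+1}$ to the inputs $\ket{\evenunder{n_L},2k}$, $\ket{\evenunder{n_L}-1,2k+1}$ purely off the diagonal; the resulting identity $U_{CC}\ket{\evenunder{n_L},2k+1}=q\overline{t}\,\ket{\evenunder{n_L},2k}$ with $q\overline{t}\in\bS^1$ is exactly the first line of \eqref{eq:cbc}. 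Symmetrically, $S_{\evenover{n_R},2k}$ is even-indexed and diagonal (since $\evenover{n_R}\ge n_R$), so the first line of \eqref{def:UCC} gives $U_{CC}\ket{\evenover{n_R},2k}=qr\,\ket{\evenover{n_R},2k-1}$, the second line of \eqref{eq:cbc}.

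For the supplementary identities, the point is that when $n_L$ is odd one has $\evenunder{n_L}=n_L-1<n_L$, so the additional matrix $S_{\evenunder{n_L},2k}$ \emph{inside} the strip is itself forced off-diagonal; being even-indexed it feeds the first line of \eqref{def:UCC}, and the off-diagonal expansion immediately produces both $U_{CC}\ket{\evenunder{n_L},2k}\propto\ket{\evenunder{n_L}+1,2k}$ and $U_{CC}\ket{\evenunder{n_L}+1,2k-1}\propto\ket{\evenunder{n_L},2k-1}$, the two claimed relations (after a relabelling $k\mapsto k+1$ in the second). The analogous observation for $n_R$ odd, namely $\evenover{n_R}-1=n_R\ge n_R$, makes the odd-indexed matrix $S_{\evenover{n_R}-1,2k}$ diagonal and yields the final pair of relations via the second line of \eqref{def:UCC}.

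The whole proof is bookkeeping; the one mildly delicate step is correctly pairing the parity of the first index of each boundary basis vector with the appropriate line of \eqref{def:UCC}, since it is precisely for the matching choice that the chiral (off-diagonal or diagonal) structure of $S$ produces the clean ``in-strip $\to$ in-strip, out $\to$ out'' channel separation needed both for invariance and for the explicit form of the boundary relations.
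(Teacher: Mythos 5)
Your proposal is correct and follows essentially the same strategy as the paper's (very terse) proof: tile the orthogonal complement of the strip by the invariant plaquettes $\bH_{\circlearrowleft}^{j,k}$ (left) and $\bH_{\circlearrowright}^{j,k}$ (right) from Lemma \ref{LR}, deduce invariance of the strip by unitarity, and then read off the chiral boundary conditions directly from \eqref{def:UCC} and the parametrisation \eqref{eq:su2}. You fill in considerably more detail than the paper (which only names the boundary plaquettes and leaves the boundary-condition derivations implicit); note also that you correctly interpret the typo $\evenunder{n_L+1}$ in the statement as $\evenunder{n_L}+1$, which is what your computation actually produces.
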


\begin{proof}By Lemma \ref{LR} we have that if $n_L=2p_L+1$ then the rightmost invariant plaquettes on the left are $\bH_{\circlearrowleft}^{p_L-1, k}$, if $n_L=2p_L$ : $\bH_{\circlearrowleft}^{p_L-1, k}$; for $n_R=2p_R$ the leftmost invariant plaquettes on the right are $\bH_{\circlearrowright}^{p_R+1,k}$, for $n_R=2p_R-1$: $\bH_{\circlearrowright}^{p_R+1,k}$  \end{proof}

\begin{rems}There is no general restriction to the dynamical behavior of $U_{I_{n_L,n_R}}$. We give some exemples:
\begin{enumerate}\item
A sharp interface $\ell^2\left(\{0\}\times\bZ; \bC\right)$ occurs for $n_L=n_R=0$, and the dynamics are
\begin{align*}
&U_{I_{0,0}} |0,2k+1\rangle = q_{-1,2k} \overline{t_{0,2k}}  |0,2k\rangle\\
&U_{I_{0,0}} |0,2k\rangle = q_{0,2k} r_{0,2k}|0,2k-1\rangle, \ \ \forall k\in \bZ.
\end{align*}
Since all coefficients at the right hand side have modulus one, we deduce that 
$U_{I_{0,0}}$ is unitarily equivalent to the shift on $l^2(\bZ)$ describing a current along the interface.

\item In the case where the strip $\ell^2\left(\{0,1,2\}\times\bZ; \bC\right)$ occurs as $I_{1,1}$
the restriction of $U_{I_{1,1}}$ to the interface 
is also unitarily equivalent to the shift on $l^2(\bZ)$, with a winding snakelike motion, see figure \ref{fig:snake}.
\item Models containing invariant plaquettes in the interface are readily constructed, see figure \ref{fig:stripplaquette}
\end{enumerate}
\end{rems}

\begin{figure}
    \centering
    \begin{subfigure}[b]{0.43\textwidth}
        \includegraphics[width=\textwidth]{./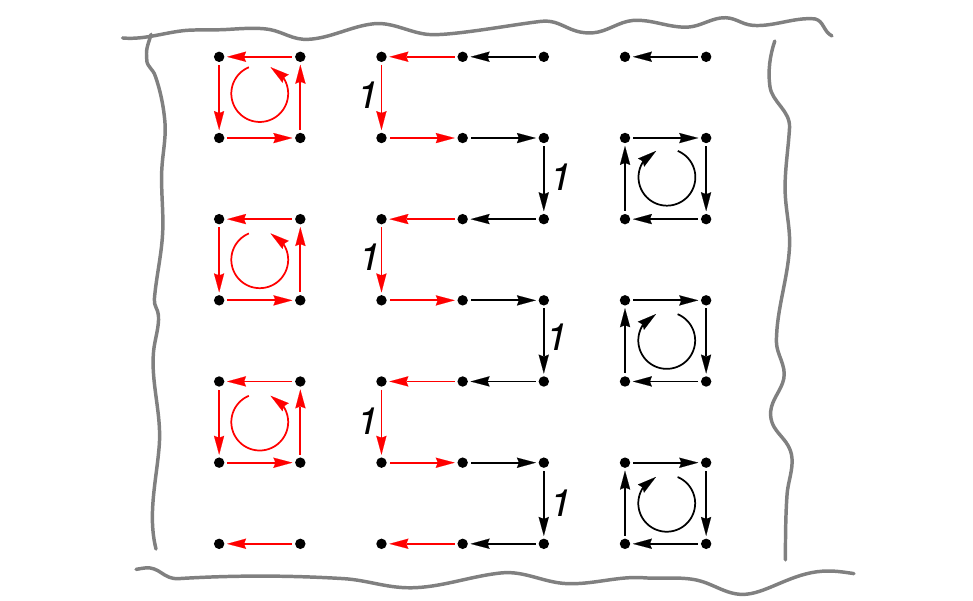}
        \caption{A snake path}
	\label{fig:snake}
    \end{subfigure}\qquad\hspace{1cm} 
    \begin{subfigure}[b]{0.43\textwidth}
        \includegraphics[width=\textwidth]{./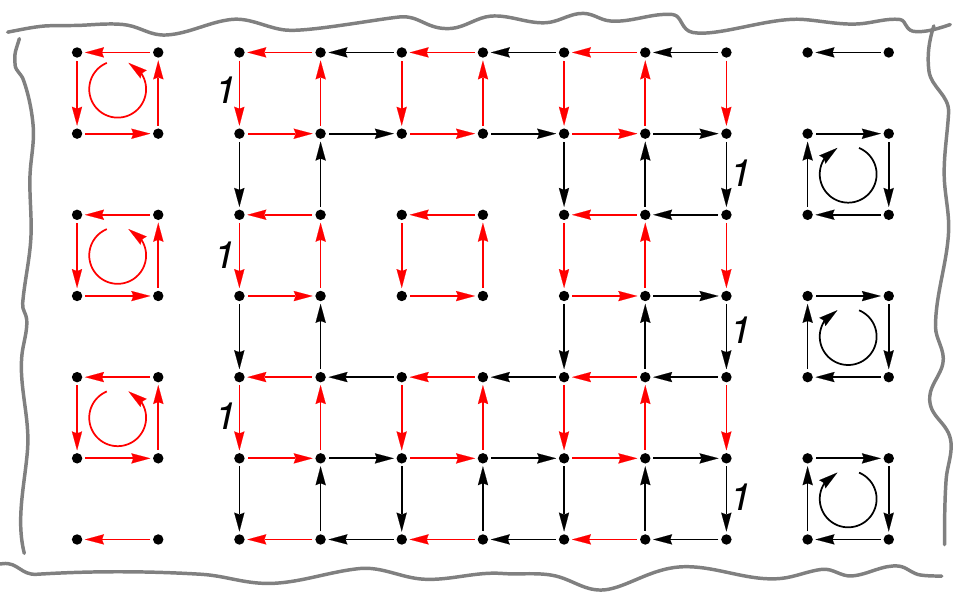}
        \caption{An invariant plaquette}
        \label{fig:stripplaquette}
    \end{subfigure}
\end{figure}

\section{Index and spectrum}
In the following we will use the notations $\sigma_{x}(A)$, $x\in \{pp, cont, ess, ac, sc\}$ to denote the pure point, continuous, essential, absolutely continuous, singular continuous  parts of the spectrum a normal operator $A$.

From a viewpoint of second quantisation
 we shall prove that the chiral boundary condition forces one particle per time step to  flow out of a half strip and deduce that the continuous spectrum is not empty. We shall do so in two different ways: first we use a topological argument,  second we provide an explicit spectral analysis of the relevant flux operator.

In $ I_{n_L,n_R}$ we consider the
projection  $Q$ on the upper half strip
defined as multiplication by $\chi(\lbrack1,\infty))$, the characteristic function of the upper half strip,  i.e.:  $$Q\psi(j,k):=\chi(k\ge1)\psi(j,k).$$ Consider the flux observable
\[\flux:= U_{I}^*QU_{I}-Q.\]
  $\flux$ measures the difference of the number of particles in the half strip at  time one and time zero.
We shall prove that  one particle  per time step is lost from a full  half strip, i.e.:
\begin{thm}\label{thm:main} Let $U_{CC}=U(\bbS)$  be such that
\[r_{j,2k}=0 \mbox{ if } \ j<n_L, \ \  t_{j,2k}=0   \mbox { if } \ j\geq n_R\]
then it holds 
$$
\mbox{\em Tr} (\Phi)=-1.
$$
\end{thm}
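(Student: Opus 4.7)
The plan is to compute $\tr(\Phi)$ directly from the explicit action of $U_I$ on the canonical basis. First I observe that $\Phi = U_I^* Q U_I - Q$ is a finite-rank operator: since $U_I$ shifts basis vectors by at most one lattice step in each direction and the strip $I_{n_L, n_R}$ has finite horizontal extent $[\evenunder{n_L}, \evenover{n_R}]$, the operator $\Phi$ acts nontrivially only on basis vectors $|j, k\rangle$ with $k \in \{0, 1\}$. Hence the trace reduces to a finite sum of diagonal matrix elements
\begin{equation*}
\langle j, k|\Phi|j, k\rangle = \|Q U_I |j, k\rangle\|^2 - \chi(k \geq 1).
\end{equation*}

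For $k = 0$ this entry is the probability that $U_I$ sends $|j, 0\rangle$ into the upper half (the upward contribution); for $k = 1$ it is minus the probability that $U_I$ sends $|j, 1\rangle$ into the lower half (the downward contribution). Inspecting the four cases in (\ref{def:UCC}), I would check that only two families carry nonzero row-crossing weight: the states $|2m+1, 0\rangle$, governed by the odd scattering matrix $S_{2m+1, 0}$, yield upward probability $|t_{2m+1, 0}|^2$, while the states $|2m, 1\rangle$, governed by $S_{2m-1, 0}$, yield downward probability $|t_{2m-1, 0}|^2$. The remaining basis vectors $|2m, 0\rangle$ and $|2m+1, 1\rangle$ are acted on by even scattering matrices that keep them in rows $\{-1, 0\}$ and $\{1, 2\}$ respectively, and so contribute nothing.

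Summing over valid column indices in $[\evenunder{n_L}, \evenover{n_R}]$, the upward sum runs over odd columns $j \in \{\evenunder{n_L}+1, \ldots, \evenover{n_R}-1\}$, while the downward sum (indexed by even columns $j$ in the strip, each contributing $|t_{j-1, 0}|^2$) effectively runs over $j-1 \in \{\evenunder{n_L}-1, \evenunder{n_L}+1, \ldots, \evenover{n_R}-1\}$. The interior terms cancel pairwise, leaving precisely the one extra term $|t_{\evenunder{n_L}-1, 0}|^2$ in the downward sum. Since $\evenunder{n_L} - 1 < n_L$, the chirality hypothesis forces $S_{\evenunder{n_L}-1, 0}$ to be off-diagonal, so $|t_{\evenunder{n_L}-1, 0}| = 1$ and $\tr(\Phi) = -1$.

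The main obstacle is the combinatorial bookkeeping across the parity cases for $n_L$ and $n_R$. When $n_L$ or $n_R$ is odd, the additional chiral boundary conditions (horizontal shifts at the edge columns) must be accounted for; in particular, when $n_L$ is odd, the odd column $j = n_L$ produces matching upward and downward terms $|t_{n_L, 0}|^2$ that still cancel, preserving the net $-1$. The asymmetry between the two ends of the strip is the signature of the opposing chiralities: clockwise plaquettes on the right straddle rows $(-1, 0), (1, 2), \ldots$ and never cross the $0/1$ interface, so only the left boundary produces an uncanceled downward transition.
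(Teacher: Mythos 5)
Your computation is correct and takes a genuinely different route from the paper. The paper's proof leans on the topological stability result, Theorem~\ref{indexproperties}.\ref{compact}: one first replaces every odd scattering matrix $S_{j,0}$ strictly inside the strip by the identity; this changes $U_I$ by a finite rank operator, hence leaves the Fredholm index $\tr\Phi=\ind(U_I^\ast Q U_I,Q)$ unchanged, and in the modified model Kitaev's sum \eqref{eq:kitaev} collapses to a single surviving term, the left-boundary matrix element of modulus one, giving $-1$ immediately. You instead evaluate $\tr\Phi$ directly in the original model: you write $\langle j,k|\Phi|j,k\rangle=\|QU_I|j,k\rangle\|^2-\chi(k\geq1)$, note the support is confined to $k\in\{0,1\}$, identify the upward weights $|t_{j,0}|^2$ for odd columns $j$ and the downward weights $|t_{j-1,0}|^2$ for even columns $j$ in the strip, and exhibit the telescoping cancellation leaving only $-|t_{\evenunder{n_L}-1,0}|^2=-1$, since $\evenunder{n_L}-1<n_L$ forces $S_{\evenunder{n_L}-1,0}$ off-diagonal. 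Your approach is more elementary and avoids any appeal to the finite-rank stability of the index, at the cost of the parity bookkeeping (for $n_L$ odd the extra boundary shifts from $S_{\evenunder{n_L},0}$ do not disturb the telescope, and for $n_R$ odd the diagonality of $S_{\evenover{n_R}-1,0}$ kills both the rightmost upward and downward contributions). It also makes the mechanism geometrically transparent: the left plaquettes $\bH^{j,0}_{\circlearrowleft}$ straddle rows $\{0,1\}$ while the right plaquettes $\bH^{j,k}_{\circlearrowright}$ never do, which is exactly why only the left boundary leaves an uncancelled transition. The paper's route is shorter and makes the topological robustness of the answer manifest, which is the point of the section; your route is in effect Kitaev's formula \eqref{eq:kitaev} evaluated by hand for this specific kernel, and either is a valid proof.
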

We deduce
\begin{cor}\label{sigcont}
$$
\sigma_{cont}(U_{CC})=\sigma_{cont}(U_{I_{n_L,n_R}})\neq \emptyset.
$$
\end{cor}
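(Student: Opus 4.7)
The plan is to derive Corollary \ref{sigcont} from Theorem \ref{thm:main} in two steps: first reduce the continuous spectrum of $U_{CC}$ to that of its restriction $U_{I_{n_L,n_R}}$, and then exclude pure point spectrum on the interface by evaluating the trace of $\Phi$ in an eigenbasis.

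For the first reduction I would decompose $\ell^2(\bZ^2;\bC)=I_{n_L,n_R}\oplus I_{n_L,n_R}^{\perp}$. The complement is $U_{CC}$-invariant and splits further into a left half $\{j<\evenunder{n_L}\}$ and a right half $\{j>\evenover{n_R}\}$. By the hypotheses of Theorem \ref{thm:main}, every scattering matrix in the left half is off-diagonal and every one in the right half is diagonal, so Lemma \ref{LR} exhibits $U_{CC}|_{I_{n_L,n_R}^\perp}$ as a direct sum of the $4\times 4$ block unitaries on the plaquettes $\bH_{\circlearrowleft}^{j,k}$ and $\bH_{\circlearrowright}^{j,k}$, each with spectrum $e^{\alpha}\{1,i,-1,-i\}$. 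Thus the restriction to $I_{n_L,n_R}^\perp$ is pure point and contributes nothing to $\sigma_{cont}$, giving $\sigma_{cont}(U_{CC})=\sigma_{cont}(U_{I_{n_L,n_R}})$.

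For non-emptiness I would argue by contradiction. Assume $U_{I_{n_L,n_R}}$ has pure point spectrum and pick an orthonormal basis $\{\phi_n\}_{n\in\bN}$ of $I_{n_L,n_R}$ of eigenvectors, with $U_{I_{n_L,n_R}}\phi_n=e^{i\theta_n}\phi_n$. Unitarity of $U_{I_{n_L,n_R}}$ gives
\[
\langle\phi_n,\Phi\phi_n\rangle=\langle U_{I_{n_L,n_R}}\phi_n,Q U_{I_{n_L,n_R}}\phi_n\rangle-\langle\phi_n,Q\phi_n\rangle=|e^{i\theta_n}|^2\langle\phi_n,Q\phi_n\rangle-\langle\phi_n,Q\phi_n\rangle=0
\]
for every $n$. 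Since Theorem \ref{thm:main} implicitly places $\Phi$ in the trace class (its trace is finite, and $\Phi=U_{I_{n_L,n_R}}^{*}QU_{I_{n_L,n_R}}-Q$ is a difference of two projections), the trace is basis-independent and equals the absolutely convergent sum $\sum_n\langle\phi_n,\Phi\phi_n\rangle=0$, contradicting $\tr(\Phi)=-1$. Hence $\sigma_{cont}(U_{I_{n_L,n_R}})\neq\emptyset$.

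The only substantive point beyond bookkeeping is the trace-class property of $\Phi$, which justifies computing its trace in the eigenbasis; this is already part of the content of Theorem \ref{thm:main}, so modulo that structural input the corollary follows immediately from the two observations above.
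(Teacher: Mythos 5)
Your argument is essentially the one the paper uses, modulo cosmetic rearrangement. The paper establishes the nontrivial half of the corollary via the Proposition that follows it in the text: it splits $\tr(\Phi)$ into pure-point and continuous pieces, observes that each eigenvector contributes zero, and concludes $\tr(P_{\rm cont}\Phi P_{\rm cont})=\tr(\Phi)\neq 0$, hence $\hil_{\rm cont}\neq\{0\}$. Your contradiction argument (assume a full eigenbasis, compute $\tr(\Phi)=0$, contradict Theorem~\ref{thm:main}) is the same computation packaged differently. You also spell out, via Lemma~\ref{LR}, why $I_{n_L,n_R}^\perp$ decomposes into invariant $4\times 4$ plaquette blocks and therefore carries only point spectrum, giving the equality $\sigma_{cont}(U_{CC})=\sigma_{cont}(U_{I_{n_L,n_R}})$; the paper leaves this step implicit, so making it explicit is a useful addition.

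One point you should tighten: the trace-class property of $\Phi$ does not follow from ``its trace is finite'' (that is a consequence, not a hypothesis), nor from the observation that $\Phi$ is a difference of two projections (compact, yes; trace class, not in general). The correct justification is that the kernel of $U_I$ has finite range, so the decay condition of Theorem~\ref{thm:kitaev} is satisfied trivially and $\Phi=U_I^*QU_I-Q$ is trace class; alternatively, the paper's explicit computation shows that $\Phi_c$ is actually finite rank. With that input supplied, your proof is complete and matches the paper's route.
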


To prove this result we make use of  the topological invariance of $\tr\flux$,  due to the fact that it is equal to the relative index of two projections, and the spectral flow formula of Kitaev which is very handy for network models. To be self-contained and  to provide some background on our line of thought,  we collect several concepts and results from \cite{ASS,ASS1,ki}. 

\begin{definition}Let $P,Q$ be selfadjoint projections such that $P-Q$ is compact. Their relative index is defined by
\[\ind(P,Q):=\dim\ker(P-Q-1)-\dim\ker(P-Q+1).\]
\end{definition}
The index has the following properties
\begin{thm}\label{indexproperties}
\begin{enumerate}
\item $\ind(P,Q)=\dim\ran P\cap\ker Q-\dim\ran Q\cap\ker P$
\item If $(P-Q)^{2n+1}$ is trace class  for some $n\in\bN_0$ then
\[\ind(P,Q)=\tr(P-Q)^{2n+1}.\]
\item \label{eq:fredholmindexformula}If $P=U^\ast QU$ for a unitary $U$ then $QUQ$ is Fredholm on $\ran Q$ and for its Fredholm index it holds
\[\ind(U^\ast QU,Q)=-ind(QUQ):=\dim\ker QU^\ast Q-\dim\ker QUQ.\]
\item If $U^\ast QU-Q$ is compact then  ${U^\ast}^n QU^n-Q$ is compact for all $n\in\bN$ and
\[\ind({U^\ast}^n QU^n,Q)=n \ind(U^\ast QU,Q).\]
\item If $P(t)=U^\ast(t) QU(t)$ with $\lbrack0,1\rbrack\mapsto U(t)$ is a norm--continuous family of unitary operators then 
\[\ind(P(t),Q)=\ind(P(0),Q)\quad\forall t\]
\item \label{compact}\[\ind(U_1^\ast QU_1,Q)=\ind(U_0^\ast QU_0,Q)\]
for unitaries $U_1,U_0$ such that $U_1-U_0$ is a compact operator.
\end{enumerate}
\end{thm}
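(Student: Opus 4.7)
The plan is to prove the six properties in the order listed, with items~1 and~2 deduced from a direct spectral analysis of the compact selfadjoint operator $A:=P-Q$, item~3 obtained by a Fredholm identification on $\ran Q$, and items~4--6 following as stability statements.

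A short expansion using $P^2=P$ and $Q^2=Q$ gives the key algebraic identities
\[
A^2+T^2=\mathbf{1},\qquad AT+TA=0,\qquad T:=\mathbf{1}-P-Q.
\]
Since $A$ is compact and selfadjoint, $\sigma(A)\setminus\{0\}$ consists of isolated eigenvalues of finite multiplicity in $[-1,1]$. At an eigenvector with $A\psi=\pm\psi$, the first identity forces $T^2\psi=0$, hence $T\psi=0$ by selfadjointness of $T$, so $(P+Q)\psi=\psi$; combined with $(P-Q)\psi=\pm\psi$ this yields $\psi\in\ran P\cap\ker Q$ (for $+1$) or $\psi\in\ran Q\cap\ker P$ (for $-1$). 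The converse inclusions are clear, proving~1. For eigenvalues $\lambda^2<1$, the anticommutation shows $T$ maps $\ker(A-\lambda)$ into $\ker(A+\lambda)$, and since $T^2=(1-\lambda^2)\mathbf{1}>0$ there, this map is bijective. The pairing $\pm\lambda$ makes the series $\sum_j\lambda_j^{2n+1}$ cancel away from $\pm 1$, so $\tr A^{2n+1}$ reduces to the $\pm 1$-eigenvalue contributions, which equal $\ind(P,Q)$; this is~2.

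For~3, using $P=U^*QU$ and $QU^*UQ=Q$, a short computation gives
\[
Q-(QUQ)^*(QUQ)=QU^*(\mathbf{1}-Q)UQ=Q(\mathbf{1}-P)Q=Q(Q-P)Q,
\]
which is compact by hypothesis, and symmetrically for $(QUQ)(QUQ)^*$; Atkinson's criterion delivers Fredholmness of $QUQ$ on $\ran Q$. The kernel identifications $\ker(QUQ|_{\ran Q})=\ran Q\cap\ker P$ and $\ker((QUQ)^*|_{\ran Q})=\ran Q\cap U\ker Q$ are immediate from the definitions, and since $U$ bijects $\ran P\cap\ker Q$ onto $\ran Q\cap U\ker Q$ via $\phi\mapsto U\phi$ (using the intertwining $QU=UP$), the Fredholm index equals $\dim(\ran Q\cap\ker P)-\dim(\ran P\cap\ker Q)$; combining with~1 gives $-\ind(P,Q)$.

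Properties~4--6 are stability statements. The claim in~4 that ${U^*}^nQU^n-Q$ stays compact is a direct induction on the identity ${U^*}^{n+1}QU^{n+1}-Q=U^*({U^*}^nQU^n-Q)U+(U^*QU-Q)$. The index equality follows from the cocycle relation $\ind(R,Q)=\ind(R,P)+\ind(P,Q)$ for three projections with pairwise compact differences (obtained by expanding a sufficiently high odd power of $R-Q=(R-P)+(P-Q)$ and using the spectral pairing from~1 to cancel the mixed terms in the trace) together with the conjugation invariance $\ind(V^*PV,V^*QV)=\ind(P,Q)$ from~1, applied telescopingly to $Q_k:={U^*}^kQU^k$. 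For~5, the eigenvalues $\pm1$ of $A(t):=U(t)^*QU(t)-Q$ are spectrally isolated, so their multiplicities are locally constant under norm-continuous perturbations; being integer-valued, $\ind$ is constant along the path. Property~6 reduces to~5: write $V:=U_0^*U_1=\mathbf{1}+\text{compact}$, a unitary in the path-connected group of unitaries of the form $\mathbf{1}+\text{compact}$; choosing a continuous path $V(s)$ from $\mathbf{1}$ to $V$ in this group, the family $U(s):=U_0V(s)$ is a norm-continuous path from $U_0$ to $U_1$ with $U(s)^*QU(s)-Q$ compact throughout, to which~5 applies.

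The step I expect to require the most care is the cocycle relation underlying~4, since the cancellation of mixed terms in $\tr\bigl((R-P)+(P-Q)\bigr)^{2n+1}$ is not obvious without an explicit invocation of the spectral pairing from~1 at the level of the traces; the remaining items are essentially linear algebra and continuity arguments once the spectral picture from~1--2 is in place.
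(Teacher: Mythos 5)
Your items 1--3 reproduce, correctly and in detail, the Avron--Seiler--Simon arguments that the paper simply cites: the supersymmetric identities for $A=P-Q$, $T=\mathbf 1-P-Q$, the $\pm\lambda$ pairing via $T$, the trace computation, and the Fredholm identification on $\ran Q$ are all sound. The paper's own ``proof'' delegates 1--4 to \cite{ASS,ASS1} and then disposes of 5--6 with the remark that they follow from the invariance properties of the Fredholm index of $QUQ$. This is worth internalizing, because item 3 already does all the heavy lifting for 4--6: once you know $\ind(U^*QU,Q)=-\ind(QUQ)$, item 5 is immediate from norm-continuity of the Fredholm index, item 6 from its invariance under compact perturbations ($QU_1Q-QU_0Q=Q(U_1-U_0)Q$ compact), and item 4 from $QU^nQ-(QUQ)^n$ compact (by induction, using that $[Q,U]$ is compact) together with multiplicativity $\ind\bigl((QUQ)^n\bigr)=n\,\ind(QUQ)$.

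There are two genuine weak points in your write-up. For item 4, the cocycle relation $\ind(R,Q)=\ind(R,P)+\ind(P,Q)$ is true, but your proposed proof --- expanding $\tr\bigl((R-P)+(P-Q)\bigr)^{2n+1}$ and ``using the spectral pairing from 1 to cancel the mixed terms'' --- does not work as stated: $(R-P)$ and $(P-Q)$ do not commute, the mixed terms do not organize themselves by the pairing in item 1 (which concerns a single difference of projections), and you yourself flag this as the shaky step. The route through item 3 sketched above replaces this with a genuine argument. For item 5, the sentence ``the eigenvalues $\pm1$ of $A(t)$ are spectrally isolated, so their multiplicities are locally constant'' is false as written: a pair $\pm\lambda_j(t)$ with $|\lambda_j|<1$ can converge to $\pm1$ and raise the individual multiplicities. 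What is locally constant is the \emph{difference} of the $\pm1$ multiplicities, and proving that requires exactly the $T$-pairing of $\pm\lambda$ eigenspaces from item 1 applied to the spectral projections onto small neighborhoods of $\pm1$; alternatively use item 3 and the continuity of the Fredholm index, as the paper does.
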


\begin{proof} The first four assertions were proven in \cite{ASS,ASS1}. The last two assertions follow the  invariance properties of the Fredholm index of $QUQ$. \end{proof}

\begin{thm}\label{thm:kitaev}In the Hilbert space $\ell^2\left(\bZ; \bC^d\right)$ consider a unitary operator $U$ 
with the  $d\times d$ matrix valued kernel $U(x,y)$  
\[U\psi(x)=\sum_{y\in\bZ}U(x,y)\psi(y)\] 
such that for an {$\alpha>2$}, a positive constant $c$ and all $x\neq y$
\begin{equation}\Vert U(x,y)\Vert_{HS}\le \frac{c}{\vert x-y\vert^\alpha}.\label{eq:decay}\end{equation}
It holds with the half space projection $Q\psi(x):=\chi(x\ge1)\psi(x)$
\begin{enumerate}
\item $U^\ast Q U-Q$ is trace class and
\begin{equation}\ind(U^\ast Q U,Q)=\tr\left(U^\ast Q U-Q\right)=\sum_{z\ge1}\sum_{y<1}\left(\Vert U(z,y)\Vert_{HS}^2-\Vert U(y,z)\Vert_{HS}^2\right)\label{eq:kitaev}\end{equation}
\item\label{kitaevwinding} If  $U(x,y)=M(x-y)$ and (\ref{eq:decay}) is assumed for $\alpha>2$ then  the Fourier transform $\widehat{M}(k):=\sum_{z\in\bZ} M(z) e^{i k z}$  is a periodic matrix valued $C^1$ function  and it holds
\[\ind(U^\ast Q U,Q)=-ind(QUQ)=wind(\det \widehat M)=\frac{1}{2\pi i}\int_0^{2\pi}\tr{\widehat{M}^\ast\partial_k \widehat{M}(k)}\ dk\]
the winding number of the determinant of $\widehat{M}$.
\end{enumerate}
\end{thm}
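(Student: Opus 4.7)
My plan for part 1 is to apply Theorem \ref{indexproperties} item 2 with $n=0$, which reduces the claim to showing that $P-Q$ is trace class, where $P := U^*QU$, and then computing $\tr(P-Q)$. Using $U^*U=I$ one writes
\[P - Q = U^*[Q,U],\]
so it is enough to show the commutator $[Q,U]$ is trace class. Its matrix kernel $\bigl(\chi(x\ge1)-\chi(y\ge 1)\bigr) U(x,y)$ is supported only for pairs $(x,y)$ lying on opposite sides of $1$. Since any $d\times d$ matrix-valued kernel $A(x,y)$ with $\sum_{x,y}\|A(x,y)\|_{HS}<\infty$ yields a trace-class operator (each block has rank $\le d$ and trace norm $\le \sqrt d\,\|A(x,y)\|_{HS}$), the claim reduces to estimating
\[\sum_{x\ge 1,\, y\le 0}\|U(x,y)\|_{HS}\le c\sum_{n\ge 1}\frac{n}{n^{\alpha}},\]
which converges exactly when $\alpha>2$; the factor $n$ counts the number of pairs at fixed separation $n$.

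For $\tr(P-Q)$ I evaluate the diagonal blocks of the kernel. Direct computation shows the $d\times d$-trace of $P(x,x)$ equals $\sum_{z\ge 1}\|U(z,x)\|_{HS}^2$, while unitarity $U^*U=I$ forces $\sum_{z\in\bZ}\|U(z,x)\|_{HS}^2 = d$. Subtracting $d\,\chi(x\ge 1)$ coming from $Q(x,x)$ and summing over $x$, the contributions from $x\ge 1$ and $x<1$ combine, after relabeling, into
\[\tr(P-Q) = \sum_{z\ge 1,\, y<1}\Bigl(\|U(z,y)\|_{HS}^2 - \|U(y,z)\|_{HS}^2\Bigr),\]
which together with Theorem \ref{indexproperties} item 2 yields \eqref{eq:kitaev}.

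For part 2, translation invariance lets me conjugate $U$ through the Fourier transform $\cF:\ell^2(\bZ;\bC^d)\to L^2(\bT;\bC^d)$ into multiplication by $\widehat M$. Unitarity of $U$ forces $\widehat M(k)\in U(d)$, hence invertible for every $k$, and the decay $\|M(z)\|_{HS}\le c|z|^{-\alpha}$ with $\alpha>2$ ensures $\sum_{z}|z|\,\|M(z)\|_{HS}<\infty$, so $\widehat M\in C^1(\bT;U(d))$. Then Theorem \ref{indexproperties} item \ref{eq:fredholmindexformula} gives $\ind(U^*QU,Q) = -\ind(QUQ)$, and $QUQ$ on $\ran Q\cong\ell^2(\bN;\bC^d)$ is identified as the block Toeplitz operator with symbol $\widehat M$. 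The Gohberg--Krein index theorem for block Toeplitz operators with continuous invertible symbol then yields $\ind(QUQ) = -\mathrm{wind}(\det\widehat M)$. The integral expression follows from Jacobi's formula $\partial_k\log\det\widehat M = \tr(\widehat M^{-1}\partial_k\widehat M)$ together with $\widehat M^{-1} = \widehat M^*$.

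The main technical hurdle is the trace-class estimate of part 1: the threshold $\alpha>2$ must be tracked carefully, since the factor $n$ counting pairs at separation $n$ is exactly what raises the summability exponent from $\alpha>1$ (enough only for Hilbert-Schmidt) up to $\alpha>2$ (needed for trace class). Part 2 then reduces cleanly to classical Toeplitz theory once $C^1$ regularity and invertibility of the symbol are in hand.
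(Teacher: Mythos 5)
Your part~1 argument is essentially the paper's, just slightly tidier: you show $[Q,U]$ is trace class directly from the kernel support and the $n\cdot n^{-\alpha}$ count, then recover $P-Q=U^*[Q,U]$; the paper instead bounds the full kernel of $\Phi(x,y)=\sum_z U(z,x)^*U(z,y)(\chi(z>0)\chi(y\le0)-\chi(z\le0)\chi(y>0))$ before summing. The trace computation from the diagonal blocks and the unitarity identity $\sum_z\|U(z,x)\|_{HS}^2=d$ is the same in both. This part is correct.

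For part~2 you take a genuinely different route. You cite the Gohberg--Krein index theorem for block Toeplitz operators with continuous invertible symbol as a black box, which is logically valid. The paper does \emph{not} cite that theorem; it completes the chain $\ind(U^*QU,Q)=-\ind(QUQ)=wind(\det\widehat M)$ by directly Fourier-expanding the winding integral, showing $\frac{1}{2\pi i}\int_0^{2\pi}\tr(\widehat M^*\partial_k\widehat M)\,dk=\sum_{z}z\tr M(z)^*M(z)$, and matching that against the trace formula already established in part~1, since $\sum_{z\ge1,y<1}(\|M(z-y)\|_{HS}^2-\|M(y-z)\|_{HS}^2)=\sum_z z\|M(z)\|_{HS}^2$. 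This is why the paper can legitimately state in Remark~3.7 that the proof of Theorem~3.6.2 \emph{provides} an explicit proof of the Toeplitz index theorem; your version inverts that logical direction by assuming the theorem rather than re-deriving it. Both are sound, but the paper's computation is self-contained and yields the Toeplitz index theorem as a byproduct, whereas yours outsources the key identity. If you want to match the spirit of the paper (and its Remark~3.7), replace the Gohberg--Krein citation with the direct Fourier computation of the winding integral and compare it with the right-hand side of \eqref{eq:kitaev}.
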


\begin{proof}1.  We use the notation $<x>:=\left(1+x^2\right)^{1/2}$.

For $\Phi:=U^\ast Q U-Q=U^\ast (Q U-UQ)$ it holds 
\begin{eqnarray*}
&&\Phi(x,y)= \sum_{z}U(z,x)^\ast U(z,y)\left(\chi(z>0)-\chi(y>0)\right)\cr
&=&  \sum_{z}U(z,x)^\ast U(z,y)\left(\chi(z>0)\chi(y\le0)-\chi(z\le0)\chi(y>0)\right)
\end{eqnarray*}
{and thus for a $c>0$
\[\Vert\Phi(x,y)\Vert_{HS}\le c\sum_z \frac{1}{{<z-x>}^\alpha}\frac{1}{{< z-y>}^\alpha}\left(\chi(z>0)\chi(y\le0)+\chi(z\le0)\chi(y>0)\right). \]
Using the estimate
\[\sum_{x} \frac{1}{{< z-x>}^\alpha}\frac{1}{{< z-y>}^\alpha}\le \frac{const}{{<y-z>}^\alpha}\]
we have
\[\sum_{y,z}\sum_x \frac{1}{{< z-x>}^\alpha}\frac{1}{{<z-y>}^\alpha}\left(\chi(z>0)\chi(y\le0)+\chi(z\le0)\chi(y>0)\right)
\]
\[\le \sum_{x\in\bZ}\sum_{y=-x+1}^x\frac{const}{< x>^\alpha}<\infty\]
and thus
\[\sum_{x,y} \Vert\Phi(x,y)\Vert_{HS} <\infty\]
which implies by \cite{bs} }
 that $U^\ast Q U-Q$ is trace class with trace
\[\tr{U^\ast Q U-Q}=\sum_{y\le0}\sum_{z>0}\left(\Vert U(z,y)\Vert_{HS}^2-\Vert U(y,z)\Vert_{HS}^2\right)\]
 By theorem \ref{indexproperties} $\ind(U^\ast Q U,Q)=\tr(U^\ast Q U-Q)$ and
(\ref{eq:kitaev}) follows.

2. The first equality is 1. together with Theorem \ref{indexproperties}.\ref{eq:fredholmindexformula}. As for the second equality observe 
\[wind(det\widehat{M})=\frac{1}{2\pi i}\int_0^{2\pi}\tr{\widehat{M}^\ast\partial_k \widehat{M}(k)}\ dk=\sum_{z\in\bZ} z \tr M(z)^\ast M(z)\]
\[=\sum_{z>0}\left( z \tr M(z)^\ast M(z)- z \tr M(-z)^\ast M(-z)\right)=\tr{U^\ast Q U-Q}.\] 
\end{proof}

\begin{rems} \label{cutindependence}
\begin{enumerate}
\item The proof of theorem \ref{thm:kitaev}.\ref{kitaevwinding} provides (via Theorem \ref{indexproperties}.\ref{eq:fredholmindexformula}) a very explicit proof of the index theorem for Toeplitz matrix operators, i.e.:
\[\ind(\widehat{Q}\widehat{M}\widehat{Q})=-wind(det\widehat{M})\]
for the Fredholm index of the the operator of multiplication by $\widehat{M}$ on the Hardy space $\{f\in L^2\left(\bS^1;\bC^d\right);\check{f}(n)=0,  \forall n\le 0\}$. See,  for example,  \cite{atiyah} for a topological proof. The explicit proof is well known in the scalar case, see \cite{brezis}.
\item    \label{2} The index is independent of the cut position, i.e.: $\hbox{for all } y_0\in\bZ$ 
\[\ind\left(U^\ast\chi(\lbrack y_0,\infty))U,\chi(\lbrack y_0,\infty))\right)=\ind\left(U^\ast\chi(\lbrack1,\infty))U,\chi(\lbrack1,\infty))\right). 
\]
\end{enumerate}
\end{rems}

\begin{proof}$D:=\chi(\lbrack y_0,\infty))-\chi(\lbrack1,\infty)))$ is a finite rank projector so $\tr\left(U^\ast D U-D\right)=\tr U^\ast D U-\tr D=0$\end{proof}

After recalling the above background material we state the proof
\begin{proof}(of {\bf Theorem \ref{thm:main}}) To prove the theorem we use the stability of the index, Theorem \ref{indexproperties}.\ref{compact}. Let $\bbS$ be the $U(2)$ valued map defining $U_{CC}$. Let $U'$ be the unitary defined by $\bbS'$ with $S'_{j,0}:=\begin{pmatrix} 1&0 \cr0&1\end{pmatrix}$ for all odd $j$ between the strip boundaries, $S_{j,2k}=S'_{j,2k}$ elsewhere,  see figure (\ref{fig:stripchanged}). The index corresponding to $U'$ is unchanged because the modification to $U'_I-U_I$ is of finite rank. In Kitaev's formula (\ref{eq:kitaev}) the only non trivial  matrix element left is the one of  $U'_I(0,1)$ on the left boundary of the strip whose modulus equals $1$, thus
\[\ind(U_I^* Q U_I,Q) =\Bra{\evenunder{n_L},1}U_I^* Q U_I-Q\Ket{ \evenunder{n_L},1}=-1.\]
\end{proof}
 
Corollary \ref{sigcont} now follows from the following sufficient condition for delocalisation:

\begin{prop} Let $U$ be a unitary operator on   $\ell^2\left(\bZ; \bC^d\right)$  such that for a $\alpha>2, c>0$
\[U\psi(x)=\sum_{y\in\bZ}U(x,y)\psi(y) \hbox{ with } \Vert U(x,y)\Vert_{HS}\le \frac{c}{\vert x-y\vert^\alpha}\qquad  \forall x\neq y\] 
then
\[\tr\left(U^\ast \chi(\lbrack1,\infty) U -\chi(\lbrack1,\infty)\right)\neq0 \Longrightarrow \sigma_{cont}(U)\neq\emptyset.\]
\end{prop}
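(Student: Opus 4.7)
The plan is to prove the contrapositive: if $U$ has no continuous spectrum, then the trace in question must vanish. First I invoke Theorem~\ref{thm:kitaev}.1 to ensure that $\Phi := U^{\ast} \chi([1,\infty)) U - \chi([1,\infty))$ is trace class under the hypothesis $\|U(x,y)\|_{HS} \le c/|x-y|^{\alpha}$ with $\alpha>2$, so that writing $\operatorname{Tr}(\Phi) = \sum_{n} \langle e_{n}, \Phi e_{n}\rangle$ in any orthonormal basis is legitimate.

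Assuming by contradiction that $\sigma_{cont}(U) = \emptyset$, normality of $U$ yields an orthonormal basis $\{\psi_{n}\}$ of $\ell^{2}(\bZ;\bC^{d})$ consisting of eigenvectors $U\psi_{n} = e^{i\theta_{n}}\psi_{n}$. I would then compute the diagonal matrix elements of $\Phi$ against this basis: setting $Q := \chi([1,\infty))$,
\[
\langle \psi_{n}, \Phi \psi_{n}\rangle = \langle U\psi_{n}, Q\, U\psi_{n}\rangle - \langle \psi_{n}, Q\psi_{n}\rangle = |e^{i\theta_{n}}|^{2} \langle \psi_{n}, Q\psi_{n}\rangle - \langle \psi_{n}, Q\psi_{n}\rangle = 0.
\]
Summing over $n$ gives $\operatorname{Tr}(\Phi) = 0$, contradicting the hypothesis. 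Hence $\sigma_{cont}(U) \neq \emptyset$.

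There is really no analytic obstacle here; the subtle point is only conceptual, namely the use of the basis-independence of the trace for a trace-class operator together with the fact that each eigenvector of the unitary $U$ produces a vanishing expectation value of $U^{\ast} Q U - Q$ because the spectral projector onto an eigenvalue commutes with $U$. The decay hypothesis is used only to guarantee trace-class membership via Theorem~\ref{thm:kitaev}; once that is in place, the argument is a one-line computation. As a remark one could add that the same argument shows $\operatorname{Tr}(\Phi)$ bounds from below the dimension (counted with multiplicity) of the continuous subspace of $U$ contributing to the flux, but for the stated proposition the contrapositive above suffices.
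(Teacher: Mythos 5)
Your proof is correct and uses essentially the same idea as the paper: every eigenvector $\ffi$ of $U$ gives $\langle \ffi, (U^\ast Q U - Q)\ffi\rangle = 0$, and since $\Phi$ is trace class (by Theorem~\ref{thm:kitaev}.1) the trace can be computed in any orthonormal basis; the paper phrases this via the spectral projections $P_{pp},P_{cont}$ and cyclicity rather than as a contrapositive, but the content is identical. One caveat about your closing aside: $\tr(\Phi)=\tr(P_{cont}\Phi P_{cont})$ is an integer (a Fredholm index) that may be negative, and it has no direct relation to the generically infinite dimension of the continuous subspace, so that remark should be dropped.
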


\begin{proof}
Denote $P_{pp} ,P_{cont} $ the projections on the pure point and continuous subspace of $U$.  
 For any eigenvector $\ffi$ of $U$, we have
$
\left\langle \ffi, (U^\ast Q U -Q )\ffi \right\rangle = 0,
$
so that 
\[\tr (P_{pp} \Phi P_{pp} )=0.\]
 Hence, by cyclicity of the trace and $P_{cont} P_{pp} =0$,
\begin{align*}
\tr (\Phi)&=\tr ((P_{pp} +P_{cont} )\Phi((P_{pp} +P_{cont} )) )
=\tr (P_{cont} \Phi P_{cont} )\neq0.
\end{align*}
\end{proof}

\begin{figure}[hbt]
\centerline {
\includegraphics[height=7cm]{./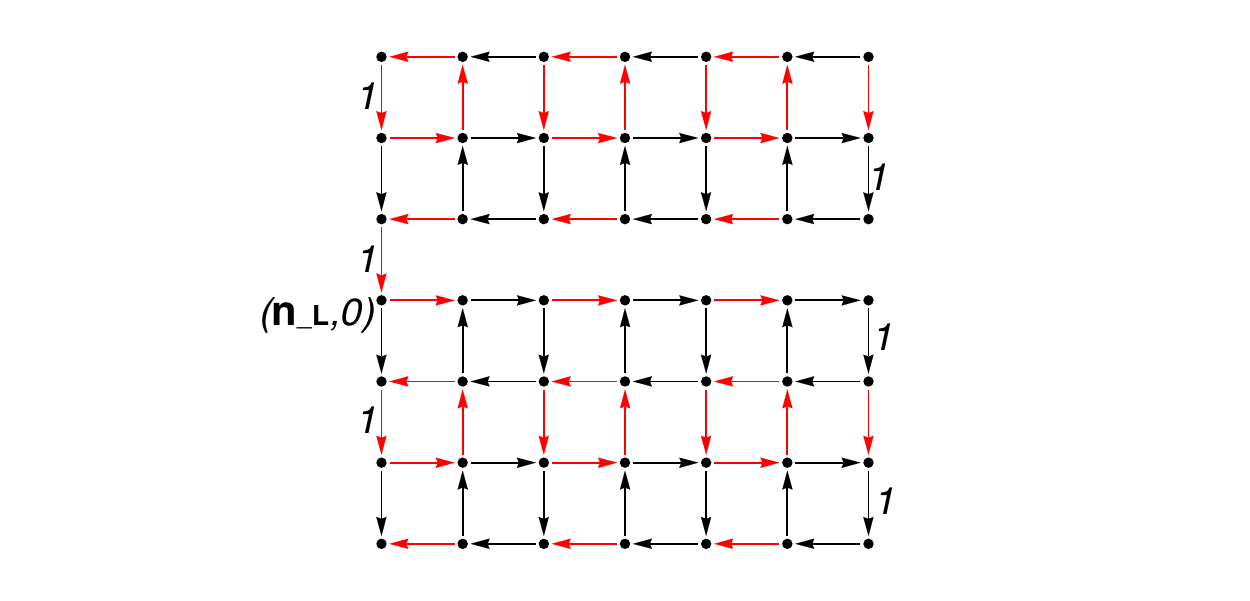}
}
\caption{A finite rank perturbation of $U_I$}
\label{fig:stripchanged}
\end{figure}

The propagation induced by the chiral boundary condition being non-trivial, it is instructive to study the spectrum of the flux observable $\flux$. 

By remark \ref{cutindependence}.\ref{2} the flux through the horizontal interface is actually independent of the cut position. Consider the flux observable through the cut at height $c\in\bZ$ i.e.:

\[\flux_c:= U_{I}^*\chi(\lbrack c,\infty))U_{I}-\chi(\lbrack c,\infty)) {\rm \ on \ } I_{n_L,n_R}.\]

An explicit computation yields the following
\begin{lem} The  finite rank self-adjoint operator $\Phi_c$ takes the following forms, depending on the parameters.\\
If $n_L=2p_L$, $n_R=2p_R$, and $c$ is even,

{\begin{align*}
\Phi_c=-|2p_R,c\rangle\langle 2p_R, c|&+ \bigoplus_{p_l\leq j < p_R} 
\begin{pmatrix}
-|r_{2j,c}|^2 & -\overline{r_{2j,c}t_{2j,c}} \cr - r_{2j,c} t_{2j,c} & |r_{2j,c}|^2
\end{pmatrix},
\end{align*}
}
where the matrices are expressed in the ordered basis $\{|2j,c\rangle, |2j+1,c-1\rangle\}$.\\
If $c$ is odd, 
{\begin{align*}
\Phi_c=-|2p_L,c\rangle\langle 2p_L, c|&+ \bigoplus_{p_L <j \leq p_R}
\begin{pmatrix}
|t_{2j-1,c-1}|^2 & -r_{2j-1,c-1}t_{2j-1,c-1} \cr -\overline{r_{2j-1,c-1}t_{2j-1,c-1}}  & -|t_{2j-1,c-1}|^2
\end{pmatrix},
\end{align*}
}
where the matrices are expressed in the ordered basis $\{|2j-1,c-1\rangle, |2j,c\rangle \}$.\\
In case $n_L=2p_L+1$ and/or $n_R=2p_R-1$, the formulae above hold true with $S_{2p_L,2k}$ off-diagonal and/or $S_{2p_R-1,2k}$ diagonal, forall $k\in \bZ$.
\end{lem}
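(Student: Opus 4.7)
My plan is to compute $\Phi_c = U_I^{*}\chi([c,\infty))U_I - \chi([c,\infty))$ basis vector by basis vector, exploiting the strict locality of $U_I$. Since $U_I$ couples only basis vectors that share a scattering matrix in (\ref{def:UCC}) or are linked by a chiral boundary condition (\ref{eq:cbc}), $\Phi_c$ annihilates every $|j',k'\rangle$ whose image $U_I|j',k'\rangle$ lies entirely on one side of the horizontal cut $\{k=c\}$: in that situation $\chi([c,\infty))$ acts as the identity (resp.\ as $0$) on both $|j',k'\rangle$ and $U_I|j',k'\rangle$, and the difference $\Phi_c|j',k'\rangle$ vanishes. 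Consequently $\Phi_c$ is supported on the finite set of basis vectors taking part in a scattering event straddling the cut, together with the boundary columns $\evenunder{n_L}$ and $\evenover{n_R}$.

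Next I would identify these straddling events. For even $c=2k$, the interior scattering matrix $S_{2j,c}$ with $p_L\leq j<p_R$ acts by (\ref{def:UCC}) on the input pair $(|2j,c-1\rangle, |2j+1,c\rangle)$, situated below and above the cut respectively, expressing $U_I|2j,c\rangle$ and $U_I|2j+1,c-1\rangle$ as explicit linear combinations of them. Hence the two-dimensional subspace $\mathrm{span}\{|2j,c\rangle, |2j+1,c-1\rangle\}$ is invariant under $\Phi_c$. A short computation using the parametrisation (\ref{eq:su2}), the adjoint formulas of Lemma 2.1.iii, and the identities $|q|=1$, $|r|^2+|t|^2=1$ produces precisely the $2\times 2$ block stated. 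For odd $c$ the symmetric computation with $S_{2j-1,c-1}$ acting on $(|2j-1,c-1\rangle, |2j,c\rangle)$ yields the other block.

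Finally, I would treat the boundary terms. For $c$ even, $|\evenover{n_R},c\rangle$ is not the output of any interior cross-cut scattering; by the chiral boundary condition (\ref{eq:cbc}), its $U_I$-image is a phase times a vector at height $c-1$, lying below the cut, so $\chi([c,\infty))U_I|\evenover{n_R},c\rangle=0$ and therefore $\Phi_c|\evenover{n_R},c\rangle = -|\evenover{n_R},c\rangle$, which gives the rank-one contribution in the statement. For $c$ odd the symmetric argument handles $|\evenunder{n_L},c\rangle$ on the left boundary, via the complementary form of (\ref{eq:cbc}). The parity variants $n_L=2p_L+1$ and/or $n_R=2p_R-1$ add one column to the strip on which the scattering is forced to be off-diagonal (on the left) or diagonal (on the right); injecting this constraint into the computation above leaves the stated formulas unchanged.

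The main obstacle is not analytic but combinatorial: one has to keep track of the eight parity combinations for $(n_L, n_R, c)$ and, for each, pair the correct scattering matrix of (\ref{def:UCC}) with the correct variant of the chiral boundary condition (\ref{eq:cbc}). Once this bookkeeping is in place, each $2\times 2$ block is a one-line unitarity calculation, and the whole lemma follows.
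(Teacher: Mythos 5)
Your approach is correct and is exactly the computation the paper implicitly invokes with ``an explicit computation yields the following''---the paper itself gives no proof, so yours fills that gap. You correctly identify that locality of $U_I$ confines the support of $\Phi_c$ to the cut, that the straddling scattering event for even $c$ is $S_{2j,c}$ with inputs $\{|2j,c\rangle,|2j+1,c-1\rangle\}$ and outputs $\{|2j,c-1\rangle,|2j+1,c\rangle\}$, and that the $2\times2$ block follows from a unitarity computation with $S_{2j,c}^*$; the boundary one-dimensional eigenspaces are handled correctly via (\ref{eq:cbc}) (modulo the paper's typo there, which you read through sensibly).

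One small inaccuracy in your narrative: the parity variants $n_L=2p_L+1$ or $n_R=2p_R-1$ do \emph{not} add a column to the strip. Since $\evenunder{2p_L+1}=2p_L$ and $\evenover{2p_R-1}=2p_R$, the interface (\ref{interface}) is the same set of columns $\{2p_L,\dots,2p_R\}$ in all four parity cases; what changes is only the \emph{constraint} that the scattering matrices $S_{2p_L,2k}$ (resp.\ $S_{2p_R-1,2k}$) along the boundary column are forced to be off-diagonal (resp.\ diagonal). Concretely, this sets $r_{2p_L,c}=0$ in the $j=p_L$ block for $c$ even (making it vanish), resp.\ $t_{2p_R-1,c-1}=0$ in the $j=p_R$ block for $c$ odd; the stated formulae then specialise but remain valid, as you conclude. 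Apart from this bookkeeping slip, the argument is sound.
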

As a  consequence, we get
\begin{thm}
With the notations above, 
$$
\sigma(\Phi_c)=\left\{\begin{array}{ll}
\{-1\}\cup_{p_L\leq j < p_R} \{+|r_{2j,c}|, -|r_{2j,c}|\}, & c\ \mbox{even}\\
\{-1\}\cup_{p_L< j \leq p_R} \{+|t_{2j-1,c-1}|, -|t_{2j-1,c-1}|\}, & c\ \mbox{odd,}
\end{array}\right.
$$
and 
$$
\mbox{\em Tr} (\Phi_c)=-1, \ \forall c\in\bZ. 
$$
\end{thm}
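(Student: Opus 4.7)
The plan is to read off both the spectrum and the trace directly from the block-diagonal decomposition of $\Phi_c$ established in the preceding lemma. In each case ($c$ even or odd) the operator is presented as the orthogonal direct sum of a rank-one piece, namely $-|2p_R,c\rangle\langle 2p_R,c|$ (resp.~$-|2p_L,c\rangle\langle 2p_L,c|$), and a family of $2\times 2$ blocks supported on mutually orthogonal pairs of basis vectors. The spectrum of the whole operator is therefore the union of the spectra of these pieces, and the trace is the sum of their traces.

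The first step is to compute the spectrum of a generic $2\times 2$ block. For $c$ even, using $|r_{2j,c}|^2+|t_{2j,c}|^2=1$, the block
\[
B_j=\begin{pmatrix} -|r_{2j,c}|^2 & -\overline{r_{2j,c}t_{2j,c}} \\ -r_{2j,c}t_{2j,c} & |r_{2j,c}|^2\end{pmatrix}
\]
satisfies $\tr B_j=0$ and $\det B_j=-|r_{2j,c}|^4-|r_{2j,c}|^2|t_{2j,c}|^2=-|r_{2j,c}|^2$, so its eigenvalues are $\pm |r_{2j,c}|$. An identical computation for the odd-$c$ block yields $\tr =0$ and $\det=-|t_{2j-1,c-1}|^2$, hence eigenvalues $\pm |t_{2j-1,c-1}|$. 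Adding the eigenvalue $-1$ contributed by the rank-one piece, and taking the union over $j$ in the indicated range, produces exactly the two sets claimed in the theorem.

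The second step is the trace identity. Since every $2\times 2$ block has vanishing trace, only the rank-one term contributes, giving $\tr(\Phi_c)=-1$ in both parities of $c$. This is consistent with Theorem \ref{thm:main} and with Remark \ref{cutindependence}.\ref{2}, which already asserts cut-independence of the trace at the level of indices. The boundary cases $n_L=2p_L+1$ and/or $n_R=2p_R-1$ are handled exactly as in the lemma: the corresponding scattering matrices are forced to be off-diagonal or diagonal, so the relevant block simply collapses while the structure of the remaining decomposition is untouched.

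Honestly there is no real obstacle here: once the explicit block form of $\Phi_c$ from the preceding lemma is granted, the argument reduces to diagonalizing a single $2\times 2$ matrix and summing traces. The only care needed is checking that the basis vectors $\{|2j,c\rangle,|2j+1,c-1\rangle\}$ (respectively $\{|2j-1,c-1\rangle,|2j,c\rangle\}$) appearing in distinct blocks are pairwise orthogonal and also orthogonal to the vector carrying the rank-one piece, so that the direct sum is genuine and the spectra simply concatenate.
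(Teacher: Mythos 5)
Your proposal is correct and follows the same route the paper takes: the theorem is stated as an immediate consequence of the preceding lemma's block decomposition, and you fill in exactly the intended computation (trace-free $2\times 2$ blocks with determinant $-|r_{2j,c}|^2$ resp.~$-|t_{2j-1,c-1}|^2$ give eigenvalues $\pm|r_{2j,c}|$ resp.~$\pm|t_{2j-1,c-1}|$; only the rank-one piece contributes $-1$ to the trace). The remark about orthogonality of the blocks is correct and automatic since the pairs of canonical basis vectors involved are pairwise disjoint.
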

\begin{rem} One sees that if 
$\lim_{j\to\infty}t_{j,c}= 0$ 
sufficiently fast for some $c$ odd and $n_R\to \infty$, then $\Phi_c$ is trace class and $\sigma_{cont}(U_{CC})\neq\emptyset$. A similar statement holds for $c$ even. \end{rem}

\subsection{The shift and the absolutely continuous spectrum}

To go beyond Corollary \ref{sigcont}, we exploit the stability of the Fredholm index $\tr (\Phi_c)$ and that of  the absolutely continuous spectrum under finite rank perturbations. The idea is to unravel the existence of a shift within $U_I$, modulo finite rank perturbations.
 
\begin{thm}\label{acs} Let $U_{CC}=U(\bbS)$  be such that
\[r_{j,2k}=0 \mbox{ if } \ j<n_L, \ \  t_{j,2k}=0   \mbox { if } \ j\geq n_R\]
There exists a decomposition of $I_{n_L, n_R}$ into orthogonal closed subspaces $\hil_u\oplus \hil_s$ such that $U_I=V\oplus S + F$, where
$V$ is unitary, $S$ is a bilateral shift, and $F$ is a finite rank operator. \\
Consequently,
$$
\sigma_{ac}(U_{CC})= {\mathbb S}^1.
$$
\end{thm}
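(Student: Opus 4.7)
We establish the claimed decomposition $U_I = V \oplus S + F$ with $V$ unitary on a closed subspace $\hil_u \subset I_{n_L,n_R}$, $S$ a bilateral shift on its orthogonal complement $\hil_s$, and $F$ finite rank. Given this, the conclusion on $\sigma_{ac}$ follows from three standard facts. First, a bilateral shift is unitarily equivalent to multiplication by $e^{i\theta}$ on $L^2(\bS^1)$, so $\sigma_{ac}(S) = \bS^1$. Second, by the Kato--Birman theorem for unitaries, the absolutely continuous spectrum is invariant under finite-rank (more generally, trace-class) perturbations, so $\sigma_{ac}(U_I) = \sigma_{ac}(V \oplus S) \supseteq \sigma_{ac}(S) = \bS^1$. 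Third, outside the interface, $U_{CC}$ decomposes by Lemma~\ref{LR} into a direct sum of four-dimensional rotating plaquettes whose spectrum is pure point and contributes no AC spectrum. Combining these yields $\sigma_{ac}(U_{CC}) = \sigma_{ac}(U_I) = \bS^1$.

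\textbf{Construction of the shift.} The main task is to perform a finite-rank surgery on $U_I$ that exposes the bilateral shift. Extending the approach used in the proof of Theorem~\ref{thm:main}, we modify finitely many interior scattering matrices --- for example the family $\{S_{j,0} : \evenunder{n_L} \le j < \evenover{n_R}\}$ in a single horizontal row $y = 0$, or in a slightly larger finite box --- to pure chirality forms. The modification is chosen so that the orbit under $\tilde U_I := U_I - F$ of every canonical basis vector outside a distinguished path closes into a finite invariant plaquette (contributing to $V$), while the chirally forced wall dynamics --- namely $U_I\Ket{n_L, 2k+1} = p\Ket{n_L, 2k}$ on the left wall and $U_I\Ket{n_R, 2k} = p\Ket{n_R, 2k-1}$ on the right wall from \eqref{eq:cbc} --- are concatenated through the modified region into a single infinite bilateral trajectory. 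The closed span of basis vectors along this trajectory is $\hil_s$, on which $\tilde U_I$ acts as the bilateral shift $S$; the orthogonal complement $\hil_u$ carries the unitary $V$.

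\textbf{Main obstacle.} The principal difficulty is the combinatorial verification that such a finite-rank surgery suffices. Away from the modification region the original interior scatterings are arbitrary, so there is a priori no reason for non-shift orbits to close into finite invariant plaquettes. The construction therefore requires a careful choice of modification --- possibly larger than a single row, but still of finite extent --- such that every orbit not on the distinguished shift path lies in a finite invariant cycle, while the shift path itself remains unbounded in both directions. The topological obstruction $\tr(\Phi_c) = -1$, cut-independent by Remark~\ref{cutindependence}.\ref{2}, is precisely what forces the concatenated wall descents to form an infinite bilateral (rather than unilateral or finite) path, ensuring that $S$ indeed has the full circle as absolutely continuous spectrum.
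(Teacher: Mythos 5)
There is a genuine gap, and it is precisely the one you flag as the ``main obstacle'': your strategy requires every orbit not on the distinguished shift path to close into a finite invariant cycle, and no finite-rank modification can guarantee this. Away from the modified region the interior scattering matrices are arbitrary elements of $U(2)$; a generic scattering matrix does \emph{not} send a basis vector to a basis vector, so basis vectors spread out and there is no decomposition of $\hil_u$ into finite-dimensional invariant ``plaquettes''. Your ``combinatorial verification'' is not merely hard, it is false in general, and the index computation $\tr(\Phi_c)=-1$ controls only the one distinguished transition through the cut --- it says nothing about whether the remaining dynamics on $\hil_u$ is a direct sum of finite pieces.

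The idea you are missing is that one does not need to control $\hil_u$ at all. The paper takes the same finite-rank modification $U'_I$ that you propose (set $S'_{j,0}=\mathbb{I}$ for the odd $j$ in the interface), but then observes that after this modification $Q\,U'_I\,Q^\perp=0$ and $Q^\perp U'_I\, Q$ has rank one, so the single basis vector $\Ket{\evenunder{n_L},1}$ spans a \emph{wandering subspace} $\cL$ for $U'_I$: once the forward orbit drops into $Q^\perp$ it can never return to $Q$, hence $(U'_I)^n\cL\perp\cL$ for all $n\ge 1$, and by unitarity for all $n\ne 0$. The abstract Sz.-Nagy--Foias theory then says $\hil_s:=\bigoplus_{n\in\bZ}(U'_I)^n\cL$ reduces $U'_I$ to a bilateral shift, with no structural information needed about its orthogonal complement $\hil_u$, which simply carries some unitary $V$. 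This sidesteps your obstacle entirely; the step you correctly identified as unachievable is in fact unnecessary. Your surrounding reasoning (Kato--Birman stability of $\sigma_{ac}$ under finite-rank perturbations, and that the exterior of the interface is pure point) is fine, but without the wandering-subspace observation the proof does not close.
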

\begin{rem}
The point spectrum of $U_{CC}$ contains all eigenvalues of the restrictions to individual plaquettes in the left and right moving phases, as described in Lemma \ref{LR}. Other finite dimensional invariant subspaces may also occur, and contribute to the point spectrum and singular continuous spectrum may be present as well. 
\end{rem}
\begin{proof} 
Consider the projection on the upper half plane and its complement
\[Q=\chi\left(\lbrack1,\infty)\right), \quad Q^\perp=\bI-Q.\]
Let $U=U(\bbS)$ and $U'$ the same modification of $U$ as in theorem \ref{thm:main} ,  see figure (\ref{fig:stripchanged}), i.e.:
$U'=U(\bbS')$  with $S'_{j,0}:=\begin{pmatrix} 1&0 \cr0&1\end{pmatrix}$ for all odd $j$ between the strip boundaries,  $S_{j,2k}=S'_{j,2k}$ elsewhere. Then $U'$ is unitary and $U'_I-U_I=:F$ is a finite rank operator so $\sigma_{ac}(U)=\sigma_{ac}(U')$. It holds for a phase $p\in\bS^1$:
\[Q^\perp U'_I Q=p \Ket{\evenunder{n_L},0}\Bra{\evenover{n_L},1}, \quad Q U'_I Q^\perp=0.\]
Consequently 
\[\cL:=span\left\{\Ket{\evenunder{n_L},1}\right\}\] is a wandering subspace in the sense of \cite{SNF}, i.e.:
\[{U'_I}^n \cL\perp \cL\qquad \forall n\ge1.\]
By unitarity if follows
\[{U'_I}^n \cL\perp \cL\qquad \forall n\in\bZ\setminus{0}\]
and that
\[\hil_s:=\bigoplus_{n\in\bZ}{U'_I}^n \cL\]
reduces $U'_I$ which on $\hil_s$ is a bilateral shift; cf. : \cite{SNF},ch.2. So $\sigma(S)=\sigma_{ac}(S)=\bS^1$.\end{proof}

\begin{rem} The idea to consider a wandering subspace was used by von Neumann in his discussion of symmetric operators with non equal defect  indices \cite{vonNeumann}; the idea to use this to get  information on the absolutely continuous spectrum is inspired by \cite{cetal}.
\end{rem}

\section{Translation Invariant Case}

We can refine the spectral analysis of the CC model provided it possesses more symmetries. We assume in this section that the CC model is invariant under vertical translations. In other words, the scattering matrices, see (\ref{eq:S}), are identical on all scattering centers with equal horizontal components:
\be\label{trainv}
S_{j,2k}=S_j, \  \ \mbox{for all} \ (j,k)\in\bZ^2.
\ee

Firstly, for the case of the interface between different chiral phases,  we shall show in addition  that $\sigma_{sc}(U_I)=\emptyset$ and provide an independent proof that $\sigma_{ac}(U_I)=\bS^1$. Secondly we characterise all  vertically translation invariant Chalker Coddington models on $\bZ^2$  explicitly as one dimensional  Quantum Walks.
\subsection{Translation invariant interface}
In this section we assume translation invariance, equation (\ref{trainv}), and different chiral phases, equation (\ref{eq:chiralphases}).

The periodicity of $\bbS$ implies a period-$2$ periodicity of $U_I$. To exploit this we regroup two horizontal slices in a vector and use Fourier transform; define for $\psi\in I_{n_L,n_R}$
\[{\bm \psi}(k)_j:=\left(\psi(j,(2k-1)), \psi(j,2k)\right);\]
for $d:=\evenover{n_R}-\evenunder{n_L}+1$ the corresponding map
\begin{equation}\label{fatmap}
I_{n_L,n_R}\to\ell^2\left(\bZ;\bC^d\right)\otimes\bC^2
\end{equation}
is unitary and we denote by
$\fatU_I$ the operator on $\ell^2\left(\bZ;\bC^d\right)\otimes\bC^2$ corresponding to $U_I$. By periodicity of $\bbS$ we have
\begin{equation}\label{eq:a}
\fatU_I\fatpsi(k)=\sum_{k'\in\bZ} V(k-k')\fatpsi(k')
\end{equation}

where the kernel $k\mapsto V(k)\in {\mathcal B}(\bC^d\otimes \bC^2)$  has compact support. In particular its Fourier transform

\begin{equation}\label{eq:b}
y\mapsto \widehat{V}(y):=\sum_{k\in\bZ} e^{-i y k}V(k)
\end{equation}

is a trigonometric polynomial. 

Remark that in the following we shall not distinguish between $\bS^1$,  the coordinates $\lbrack0,2\pi)\ni y\mapsto e^{-iy}\in\bS^1$ or $\bR/{2\pi\bZ}$.
For its winding number of $\det{\widehat{V}}$ it holds

\begin{prop}For $\widehat{V}$ defined in (\ref{eq:a}), (\ref{eq:b}) it holds
\[ wind\left(\det{\widehat{V}}\right)=-1.\]

\end{prop}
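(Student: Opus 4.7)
The plan is to deduce this proposition as a direct consequence of Theorem \ref{thm:main} combined with the winding number formula in Theorem \ref{thm:kitaev}.\ref{kitaevwinding}. The whole point of the regrouping in equation (\ref{fatmap}) is to turn $U_I$ into a translation invariant operator on $\ell^2(\bZ; \bC^{2d})$ (with fiber $\bC^d \otimes \bC^2$), so that Kitaev's winding formula becomes directly applicable, while at the same time the index on the left-hand side is already computed in Theorem \ref{thm:main}.

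First, I verify that $\fatU_I$ on $\ell^2(\bZ; \bC^{2d})$ satisfies the hypotheses of Theorem \ref{thm:kitaev}.\ref{kitaevwinding}. By construction of $U_{CC}$ via the scattering kernels (\ref{def:UCC}), a single application of $U_I$ only couples sites at vertical distance at most $1$, so after grouping two consecutive slices into one vector, the kernel $V$ in (\ref{eq:a}) has support in $\{-1, 0, 1\}$. In particular, $V$ has compact support, so the decay estimate (\ref{eq:decay}) holds trivially for any $\alpha > 2$, and $\widehat V$ is a trigonometric polynomial as noted in the excerpt. Theorem \ref{thm:kitaev}.\ref{kitaevwinding} therefore applies and gives
\[
\mathrm{wind}\bigl(\det \widehat V\bigr) \;=\; \mathrm{ind}\bigl(\fatU_I^\ast \bm Q\, \fatU_I, \bm Q\bigr) \;=\; \tr\bigl(\fatU_I^\ast \bm Q\, \fatU_I - \bm Q\bigr),
\]
where $\bm Q$ is the projection on indices $k \ge 1$ in the new variable.

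Second, I identify this trace with $\tr(\Phi)$ from Theorem \ref{thm:main}. Under the unitary identification (\ref{fatmap}), the pair $\fatpsi(k) = (\psi(j, 2k-1), \psi(j,2k))$ assigns the rows $2k-1$ and $2k$ of the original strip to index $k$ in $\bZ$. Hence the upper-half-strip projection $Q = \chi(\{\text{row} \ge 1\})$ on $I_{n_L, n_R}$ corresponds exactly to the half-line projection $\bm Q = \chi(\{k \ge 1\})$ on $\ell^2(\bZ; \bC^{2d})$. Since unitary conjugation preserves the trace, and $U_I$ is unitarily equivalent to $\fatU_I$,
\[
\tr\bigl(\fatU_I^\ast \bm Q\, \fatU_I - \bm Q\bigr) \;=\; \tr\bigl(U_I^\ast Q\, U_I - Q\bigr) \;=\; \tr(\Phi).
\]
By Theorem \ref{thm:main}, the right-hand side equals $-1$, which yields $\mathrm{wind}(\det \widehat V) = -1$.

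There is no serious obstacle here: the only point that requires a moment's care is to check that the half-space projection in the grouped variable matches the upper-half-strip projection in the original coordinates, and that the regrouping is unitary, so that both the trace and the index are preserved. Everything else is a direct citation of the two preceding theorems.
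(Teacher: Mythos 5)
Your proposal is correct and follows exactly the same route as the paper: apply Theorem \ref{thm:kitaev}.\ref{kitaevwinding} to $\fatU_I$ to express the winding number as the trace of the flux, then use the unitarity of the regrouping map (\ref{fatmap}) together with Theorem \ref{thm:main} to evaluate that trace as $-1$. The only difference is that you spell out the verification that $V$ has compact support and that the grouped half-line projection matches the original half-strip projection, which the paper leaves implicit.
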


\begin{proof}With $\fatchi(\lbrack1,\infty)):=\chi(\lbrack1,\infty))\otimes\bI$ it holds by Theorem \ref{thm:kitaev}
\[wind\left(\det{\widehat{V}}\right)=\tr\left({\fatU_I^\ast \fatchi(\lbrack1,\infty))\fatU_I-\fatchi(\lbrack1,\infty)})\right)\] On the other hand by unitary equivalence and  Theorem \ref{thm:main} we have
\[\tr\left({\fatU_I^\ast \fatchi(\lbrack1,\infty))\fatU_I-\fatchi(\lbrack1,\infty)})\right)=\tr\left({U_I^\ast \chi(\lbrack1,\infty))U_I-\chi(\lbrack1,\infty)})\right)=-1.\]

\end{proof}

A non trivial winding of the determinant of a continuous matrix-valued multiplication operator implies that its spectrum  is the whole circle. If it is analytic the absolutely continuous spectrum is the whole circle and the singular continuous spectrum is empty. The result is non trivial because eigenvalues of periodic unitaries need not be periodic as the example  $\begin{pmatrix}0&1\cr e^{iy}&0\end{pmatrix}$ shows.

\begin{lem}\label{lemma:goodshit}For $d\in\bN$ consider  $V\in C^0\left(\bS^1; U(d)\right)$,  a continuous unitary matrix-valued function such that
\[wind\left( \det V\right)\neq0.\]
Then it holds
\begin{enumerate}
\item  For all $\lambda\in\bS^1$ there exists a $z\in\bS^1$ such that $\lambda\in\sigma(V(z))$;
\item  for $V$ the operator of multiplication by $V(z)$ on $\ell^2\left(\bS^1;\bC^d\right)$:
\[\sigma(V)=\bS^1.\]
\end{enumerate}
\end{lem}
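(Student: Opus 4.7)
The plan is to prove (1) by contradiction, using the continuous functional calculus to write $\det V$ as the exponential of a single-valued continuous function on $\bS^1$, and then to deduce (2) from the standard description of the spectrum of a continuous matrix-valued multiplication operator.

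For (1), I would assume there exists $\lambda_0\in\bS^1$ such that $\lambda_0\notin\sigma(V(z))$ for every $z\in\bS^1$. Then $\bigcup_{z\in\bS^1}\sigma(V(z))$ is a compact subset of the open arc $\bS^1\setminus\{\lambda_0\}$, on which we can fix a continuous branch $\mathrm{Log}:\bS^1\setminus\{\lambda_0\}\to i\bR$ of the logarithm. Since each $V(z)$ is normal with spectrum in the domain of $\mathrm{Log}$, continuous functional calculus yields a norm-continuous matrix-valued map $z\mapsto\mathrm{Log}(V(z))$, and therefore a continuous scalar function $g(z):=\mathrm{tr}(\mathrm{Log}(V(z)))$ on $\bS^1$.

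The identity $\det V(z)=\exp\bigl(\mathrm{tr}(\mathrm{Log}(V(z)))\bigr)=e^{g(z)}$ holds (it suffices to check it after diagonalising $V(z)$, summing over eigenvalues listed with multiplicity). But the winding number of $z\mapsto e^{g(z)}$ around $0$ is zero because $g$ is a globally continuous $\bC$-valued function on the circle. This contradicts $\mathrm{wind}(\det V)\neq 0$, so (1) must hold. The one subtle point, and in my view the main obstacle, is that a naive eigenvalue-by-eigenvalue argument fails because the eigenvalue branches of $V(z)$ can undergo monodromy as $z$ goes around $\bS^1$; the functional calculus sidesteps this because $\mathrm{Log}$ is single-valued on an open neighborhood of the union $\bigcup_z\sigma(V(z))$.

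For (2), let $\mathcal{M}_V$ denote the multiplication operator. Since $V$ is continuous and unitary-valued, $\mathcal{M}_V$ is a normal operator and one has the standard identification $\sigma(\mathcal{M}_V)=\bigcup_{z\in\bS^1}\sigma(V(z))$, the right-hand side being closed as the image of the compact set $\{(z,\lambda)\in\bS^1\times\bS^1:\det(\lambda I-V(z))=0\}$ under the second projection. Given $\lambda\in\bS^1$, part (1) produces $z_0$ with $\lambda\in\sigma(V(z_0))$, and continuity of $V$ combined with upper semicontinuity of the spectrum shows that $\|(V(z)-\lambda I)^{-1}\|$ is unbounded on a positive-measure neighbourhood of $z_0$, so $\lambda\in\sigma(\mathcal{M}_V)$. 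Hence $\sigma(\mathcal{M}_V)\supseteq\bS^1$, and the reverse inclusion is immediate from $V(z)\in U(d)$, giving $\sigma(\mathcal{M}_V)=\bS^1$.
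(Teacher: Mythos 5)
Your proof is correct and follows essentially the same route as the paper: for part (1) both use the cut logarithm $\log_{\lambda_0}$ via the continuous functional calculus to produce a globally continuous scalar $g=\tr\log_{\lambda_0}V$ with $\det V=e^{g}$, contradicting the nonzero winding number; and for part (2) both invoke continuity of the fiber spectra together with the essential-range description of the spectrum of a multiplication operator. Your phrasing of the winding argument via a continuous single-valued logarithm is slightly cleaner for $C^0$ data than the paper's integral formula $\frac{1}{2\pi i}\oint d\,\tr\log_{\lambda}V$, which tacitly assumes differentiability, but the underlying idea is identical.
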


\begin{proof} 1. 
{If there existed a  $\lambda\in S^1$ such that $\lambda\notin \sigma(V(z)$  for all $z \in S^1$,  the logarithm with cut at $\lambda$ would be well-defined on $\sigma(V (z))$ and thus $\log_\lambda V(z)$ for all $z$;  this would imply
\[wind(det V)=\frac{1}{2\pi i}\oint\frac{d \det(V)}{\det V}=\frac{1}{2\pi i}\oint d\tr\log_\lambda V=0\] 
}

2. $\sigma(V(z))$ depends continuously on $z$ thus for $\lambda\in\bS^1$ the Lebesgue measure 
$\mu_L\left(\sigma(V(.))\cap(\lambda-\varepsilon, \lambda+\varepsilon)\neq\emptyset\right)\neq0\quad \forall\varepsilon>0$ thus $\lambda\in\sigma(V)$.
\end{proof}

It follows

\begin{thm} Under the assumption in equations (\ref{trainv}) and (\ref{eq:chiralphases}) it holds
$$
\sigma_{sc}(U_{CC})= \emptyset \hbox{ and } \sigma_{ac}(U_{CC})=\bS^1.
$$
\end{thm}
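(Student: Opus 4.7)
The strategy is to reduce to the interface operator $U_I$ and analyze it via Floquet--Bloch, combining the winding computation of the preceding Proposition with Kato's analytic perturbation theory. First, by (\ref{eq:chiralphases}) and translation invariance, Lemma \ref{LR} decomposes the orthogonal complement of $I_{n_L,n_R}$ in $\ell^2(\bZ^2)$ into a direct sum over $k\in\bZ$ of four-dimensional invariant plaquettes whose scattering data do not depend on $k$. Each column thus contributes four eigenvalues of infinite multiplicity, so the spectrum of $U_{CC}$ off the interface is pure point. Hence $\sigma_{ac}(U_{CC})=\sigma_{ac}(U_I)$ and $\sigma_{sc}(U_{CC})=\sigma_{sc}(U_I)$, and it suffices to analyze $U_I$.

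The Fourier transform in (\ref{fatmap})--(\ref{eq:b}) identifies $U_I$ with multiplication by the trigonometric polynomial $\widehat V(y)$ on $L^2(\bS^1;\bC^{2d})$, where $\widehat V(y)\in U(2d)$ depends real-analytically on $y$. Kato's analytic perturbation theory then yields analytic eigenvalue branches $\lambda_1(y),\dots,\lambda_{2d}(y)$ on $\bR$, single-valued up to a permutation monodromy under $y\mapsto y+2\pi$; away from the discrete set of crossings, the associated rank-one eigenprojections are analytic and provide a direct-sum decomposition of $L^2(\bS^1;\bC^{2d})$ on which $\widehat V$ acts as multiplication by the scalar $\lambda_j$ on the $j$-th summand. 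On each summand, either $\lambda_j$ is globally constant (contributing an eigenvalue of infinite multiplicity to the pure point spectrum), or $\lambda_j$ is non-constant analytic, in which case its pushforward of Lebesgue measure on $\bS^1$ is absolutely continuous on its range. No branch contributes singular continuous spectrum, so $\sigma_{sc}(U_I)=\emptyset$.

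For the absolute continuity, I would exploit $\mbox{wind}(\det\widehat V) = -1$ from the preceding Proposition. Grouping the branches into monodromy cycles $C$ of length $k_C$, each product $\prod_{j\in C}\lambda_j$ descends to a single-valued continuous map $\bS^1\to\bS^1$ with some winding $w_C$, and $\sum_C w_C = -1$, so some $w_C\ne 0$. Lifting one branch $\lambda_{i_1}$ (with $i_1\in C$) to the $k_C$-fold cover of $\bS^1$ produces a single-valued continuous function of winding $w_C$, hence surjective onto $\bS^1$, whose image equals $\bigcup_{j\in C}\lambda_j(\bS^1)$. The non-constant branches in this cycle thus cover $\bS^1$ and contribute $\bS^1$ to $\sigma_{ac}(U_I)$, yielding $\sigma_{ac}(U_I)=\bS^1$.

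The main technical difficulty is the careful handling of the monodromy together with eigenvalue crossings, where the individual analytic eigenprojections may blow up; however the crossings form a discrete subset of $\bS^1$ (roots of the discriminant of the analytic characteristic polynomial of $\widehat V$), hence have measure zero, so the scalar branch decomposition and the ensuing spectral-type analysis go through on the complement without affecting the conclusions.
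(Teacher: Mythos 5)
Your proposal is correct, and it shares the paper's overall structure --- reduce to $U_I$ since the plaquettes off the interface give pure point spectrum, pass to Floquet--Bloch, use Kato's analytic perturbation theory to obtain analytic eigenvalue branches $\lambda_j(y)$, observe that non-constant branches carry absolutely continuous spectrum on their ranges while constant ones carry eigenvalues, hence $\sigma_{sc}(U_I)=\emptyset$ --- but it takes a genuinely different route for the final step $\sigma_{ac}(U_I)=\bS^1$. The paper invokes Lemma~\ref{lemma:goodshit}: if some $\lambda\in\bS^1$ were avoided by $\sigma(\widehat V(z))$ for every $z$, then $\log_\lambda\widehat V(z)$ would be a globally defined continuous branch and $\mbox{wind}(\det\widehat V)$ would be $0$, contradicting the winding $-1$; hence the union of eigenvalue ranges is all of $\bS^1$, and since constant branches contribute only finitely many points while non-constant ones contribute closed arcs, the (closed) union of arcs must already be all of $\bS^1$. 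Your argument bypasses this lemma: you group the branches into monodromy cycles under $y\mapsto y+2\pi$, note that the cycle windings $w_C$ sum to $\mbox{wind}(\det\widehat V)=-1$, pick a cycle with $w_C\ne 0$, and lift its branch to the $k_C$-fold cover to get a degree-$w_C$ (hence surjective) map onto $\bS^1$ whose image is $\bigcup_{j\in C}\lambda_j(\bS^1)$, and these branches are automatically non-constant. Both are valid; yours is more constructive in that it identifies which branches fill the circle (and makes the non-constancy issue explicit), at the cost of the covering-space bookkeeping, while the paper's soft obstruction argument is shorter. One small imprecision to fix: the crossing locus is not in general the zero set of the discriminant of the characteristic polynomial, since that discriminant vanishes identically whenever two analytic branches coincide permanently; what Kato actually gives, and what you need for the measure-zero statement, is that \emph{distinct} analytic branches coincide only on an isolated set.
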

\begin{proof}
It is sufficient to show the assertions for $U_I$ as the  spectrum outside the strip $I$ is explicit.  

By unitary equivalence the spectrum is the same as the one of the multiplication operator by $\widehat{V}$ on $L^2(\bS^1; \bC^d)$. The Fourier transform $\widehat{V}$ is a trigonometric polynomial thus analytic;  so its eigenvalues can be chosen to be analytic \cite{k}. Thus the spectrum of the fibered operator consists of the ranges of a finite set of eigenvalues made of absolutely continuous spectrum, see \cite{rs}, Theorem XIII.86,  which by Lemma \ref{lemma:goodshit}, is the whole circle.

\end{proof}

\subsection{General translation invariant model and reduction to a family of one dimensional  Quantum Walks}
Using a more carefully chosen partial Fourier transformation we now show that a vertically translation invariant Chalker Coddington model can be represented as a direct integral of a family of one dimensional  Quantum Walks.

\begin{prop} Consider a Chalker Coddington Model $U_{CC}=U_{CC}(\bbS)$, assume periodicity for $\bbS$ as in equation (\ref{trainv}). On
$\ell^2(\bZ)\otimes\bC^2$ define the unitary Quantum Walk
\[U_{QW}(y):=\fatS C(y)\]
where with $\Ket{+}=(1,0), \Ket{-}=(0,1)$,
\[\fatS \Ket{j}\otimes\Ket{+}:=\Ket{j+1}\otimes\Ket{+}, \quad \fatS \Ket{j}\otimes\Ket{-}:=\Ket{j-1}\otimes\Ket{-},\]
\[C(y)\ket{j}\otimes v:=\ket{j}\otimes C_j(y)v\qquad \forall v\in\bC^2\]
\[C_{2j}(y):=q_{2j}\begin{pmatrix}-t_{2j}& \overline{r_{2j}}e^{iy} \cr r_{2j}e^{-iy}&\overline{t_{2j}}\end{pmatrix},\quad
C_{2j+1}(y):=q_{2j+1}\begin{pmatrix}r_{2j+1}&\overline{t_{2j+1}} \cr -t_{2j+1}&\overline{r_{2j+1}}\end{pmatrix}.\]
Then $U_{CC}$ is unitarily equivalent to the fibered operator
\[\int_{\bS^1}^\oplus U_{QW}(y)\ \frac{dy}{2\pi} \quad\hbox{ on } L^2\left(\bS^1;\ell^2(\bZ)\otimes\bC^2\right).\]
\end{prop}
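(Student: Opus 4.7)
The plan is a Bloch--Floquet decomposition in the vertical direction. Under the assumption (\ref{trainv}), the operator $U_{CC}$ commutes with the period-two vertical shift $T_2:\Ket{j,k}\mapsto\Ket{j,k+2}$, since $S_{j,2k}$ is independent of $k$. So a partial Fourier transform in the variable dual to $T_2$ will diagonalise this symmetry and exhibit $U_{CC}$ as a direct integral over $y\in\bS^1$; the work is to identify each fiber with the claimed walk $U_{QW}(y)=\fatS C(y)$.

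Concretely, I would introduce the unitary $R:\ell^2(\bZ^2)\to\ell^2(\bZ)\otimes\bC^2\otimes\ell^2(\bZ)$ defined on basis vectors by
\[R\Ket{j,2k}:=\Ket{j}\otimes\Ket{+}\otimes\Ket{k},\qquad R\Ket{j,2k+1}:=\Ket{j-1}\otimes\Ket{-}\otimes\Ket{k},\]
so that the last factor carries the $k$-label of a fundamental domain for $T_2$. The asymmetric shift by $-1$ in the first slot of the $\Ket{-}$ sector is what allows the horizontal motion implicit in the CC scattering to reassemble as a bilateral shift $\fatS$. Composing $R$ with the Fourier transform $\cF:\ell^2(\bZ)\to L^2(\bS^1)$, $\Ket{k}\mapsto e^{iky}$, on the third factor gives a unitary $W:=(\bI\otimes\bI\otimes\cF)R$ of the target form $\ell^2(\bZ^2)\to L^2(\bS^1;\ell^2(\bZ)\otimes\bC^2)$.

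Next I would compute $WU_{CC}W^\ast$ by evaluating $U_{CC}$ on each of the four families of basis vectors in (\ref{def:UCC}) and transporting via $R$. For example, $U_{CC}\Ket{2j,2k}=q_{2j}r_{2j}\Ket{2j,2k-1}-q_{2j}t_{2j}\Ket{2j+1,2k}$ becomes, under $R$,
\[q_{2j}r_{2j}\Ket{2j-1}\otimes\Ket{-}\otimes\Ket{k-1}-q_{2j}t_{2j}\Ket{2j+1}\otimes\Ket{+}\otimes\Ket{k};\]
the shift $\Ket{k}\mapsto\Ket{k-1}$ Fourier-transforms to multiplication by $e^{-iy}$, so the fiber action on $\Ket{2j}\otimes\Ket{+}$ is $-q_{2j}t_{2j}\Ket{2j+1}\otimes\Ket{+}+q_{2j}r_{2j}e^{-iy}\Ket{2j-1}\otimes\Ket{-}$, which is precisely $\fatS C_{2j}(y)\Ket{2j}\otimes\Ket{+}$. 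The three remaining equations of (\ref{def:UCC}) supply $\fatS C_{2j}(y)\Ket{2j}\otimes\Ket{-}$, $\fatS C_{2j+1}(y)\Ket{2j+1}\otimes\Ket{+}$ and $\fatS C_{2j+1}(y)\Ket{2j+1}\otimes\Ket{-}$, yielding $WU_{CC}W^\ast=\int_{\bS^1}^\oplus U_{QW}(y)\,\tfrac{dy}{2\pi}$.

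The subtlety lies in choosing the section $R$: the naive identification $\Ket{j,2k+1}\mapsto\Ket{j}\otimes\Ket{-}\otimes\Ket{k}$ also produces a direct integral of quantum walks, but not the one with shift $\fatS$ and the coins $C_j$ as stated. The $-1$ shift in the $\Ket{-}$ sector is forced by the requirement that $R$ absorb the horizontal displacement induced by the odd-indexed scattering matrices, so that the $y$-dependence appears only in the even-indexed coin $C_{2j}(y)$; getting this bookkeeping right is the main, and only non-routine, obstacle.
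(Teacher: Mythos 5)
Your proof is correct and follows essentially the same route as the paper: the unitary $W=(\bI\otimes\bI\otimes\cF)R$ you define is exactly the map $\cF$ the paper uses, with the same horizontal shift by $-1$ in the $\Ket{-}$ sector, and both verifications proceed by evaluating (\ref{def:UCC}) on basis vectors and reading off the fiber action. Your identification of the $-1$ shift as the essential bookkeeping choice agrees with what the paper implicitly does in its definition of $\cF$.
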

\begin{proof}

Define the unitary
\[\cF:\ell^2\left(\bZ^2; \bC\right)\to L^2(\bS^1)\otimes\ell^2(\bZ;\bC)\otimes\bC^2\]
by
\[\cF\Ket{j,2k}:=e^{iky}\otimes\Ket{j}\otimes\Ket{+}=e^{iky}\otimes\Ket{j,+},\quad \cF\Ket{j,2k+1}:=e^{iky}\otimes\Ket{j-1}\otimes\Ket{-}=e^{iky}\otimes\Ket{j-1,-}.\] 
By the definition of $U_{CC}(\bbS)$, equations (\ref{def:UCC}) we have

\begin{align*}
&\begin{pmatrix} \cF U_{CC}\cF^{-1} e^{iky}\otimes\Ket{2j,+} \cr 
\cF U_{CC}\cF^{-1} e^{-iy}e^{iky}\otimes\Ket{2j,-}  \end{pmatrix} = S_{2j} \begin{pmatrix} e^{-iy}e^{iky}\otimes\Ket{2j-1,-} \cr e^{iky}\otimes\Ket{2j+1,+} \end{pmatrix},\notag \\
&\begin{pmatrix} \cF U_{CC}\cF^{-1} e^{iky}\otimes\Ket{2j+1,+} \cr 
\cF U_{CC}\cF^{-1}e^{iky}\otimes\Ket{2j+1,-} \end{pmatrix} = S_{2j+1} \begin{pmatrix} e^{iky}\otimes\Ket{2j+2,+}  \cr e^{iky}\otimes\Ket{2j,-} \end{pmatrix}.
\end{align*}
So $C_{2j+1}=S^T_{2j+1}$  and $C_{2j}=\begin{pmatrix}0&1\cr e^{-iy}& 0\end{pmatrix}S^T_{2j}\begin{pmatrix}1&0\cr0&e^{iy}\end{pmatrix}$ which gives the result.
\end{proof}

\begin{prop}\label{defMQW}Consider a Chalker Coddington Model $U_{CC}=U_{CC}(\bbS)$, assume periodicity for $\bbS$ as in equation (\ref{trainv}). For $y\in\lbrack0,2\pi)$ consider the unitary operator defined on $\ell^2(\bZ)$ by the  matrix $M_{QW}(y):=$

     \begin{equation}\label{matMQW}
     \mkern-60mu\bordermatrix{
      & &  &      4j   \downarrow           &                            &    &    &   \cr
      & \ddots &  0  &                    &                          &    &    &    &   \cr
         &  & 0 & e^{-iy}r_{2j}q_{2j} & \overline{t_{2j}}q_{2j} &    &    &    &   \cr 
       4j_\rightarrow &  & \overline{t_{2j-1}}q_{2j-1} & 0 & 0 &    &    &    &   \cr 
          &  &  & 0 & 0 &  -t_{2j+1}q_{2j+1}  &  \overline{r_{2j+1}}q_{2j+1}  &    &   \cr 
         &   &  & -t_{2j}q_{2j} & e^{iy}\overline{r_{2j}}q_{2j} & 0  & 0  &    &   \cr 
         &   &     &           &                                  &  0  & 0  &  e^{-iy}r_{2j+2}q_{2j+2}  &   \cr 
          &   &  &  &  &  r_{2j+1}q_{2j+1}  &  \overline{t_{2j+1}}q_{2j+1}  & 0   &   \cr 
&& & & & & & 0 & \ddots
}.
\end{equation}
Then $U_{CC}$ is unitarily equivalent to 
\[\int_{\bS^1}^\oplus M_{QW}(y)\ \frac{dy}{2\pi} \qquad \hbox{ on } L^2\left(\bS^1;\ell^2(\bZ)\right).\]
\end{prop}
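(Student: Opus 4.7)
The plan is to build directly on the preceding proposition, which already provides the fiber decomposition $U_{CC}\cong \int_{\bS^1}^{\oplus} U_{QW}(y)\, \frac{dy}{2\pi}$ on $L^2(\bS^1;\ell^2(\bZ)\otimes\bC^2)$. All that remains is to rewrite each fiber $U_{QW}(y)=\fatS C(y)$ acting on $\ell^2(\bZ)\otimes\bC^2$ as the scalar operator $M_{QW}(y)$ on $\ell^2(\bZ)$ via a fiber-independent unitary identification of the two Hilbert spaces.

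Concretely, I would first introduce the unitary $\mathcal{V}:\ell^2(\bZ)\otimes\bC^2\to\ell^2(\bZ)$ defined on the canonical basis by $\Ket{j}\otimes\Ket{+}\mapsto\Ket{2j}$ and $\Ket{j}\otimes\Ket{-}\mapsto\Ket{2j+1}$. Since this pairs the two coin components at each site and reindexes the result bijectively onto the integers, $\mathcal{V}$ is manifestly unitary. The whole statement then reduces to the fiberwise identity
\[
\mathcal{V}\, U_{QW}(y)\,\mathcal{V}^{\ast} \;=\; M_{QW}(y)\qquad\forall\, y\in\lbrack 0,2\pi),
\]
for, once this is established, composing the equivalence of the previous proposition with $\int_{\bS^1}^{\oplus}\mathcal{V}\, \frac{dy}{2\pi}$ will yield the claimed unitary equivalence between $U_{CC}$ and $\int_{\bS^1}^{\oplus}M_{QW}(y)\, \frac{dy}{2\pi}$.

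To verify the fiberwise identity I would simply apply $\fatS C(y)$ to the four families of basis vectors $\Ket{2j}\otimes\Ket{\pm}$ and $\Ket{2j+1}\otimes\Ket{\pm}$, using the explicit formulas for $C_{2j}(y)$, $C_{2j+1}(y)$ and $\fatS$ recorded in the preceding proposition, and match the outputs with the columns of $M_{QW}(y)$ read off from \eqref{matMQW}. Under $\mathcal{V}$ the four input vectors go respectively to $\Ket{4j},\Ket{4j+1},\Ket{4j+2},\Ket{4j+3}$, and their images land in the span of $\{\Ket{4j-1},\Ket{4j+2},\Ket{4j+1},\Ket{4j+4}\}$ with coefficients exactly the nonzero entries of columns $4j,4j+1,4j+2,4j+3$ of \eqref{matMQW}; note that the phases $e^{\pm iy}$ enter precisely through $C_{2j}(y)$ and are absent from $C_{2j+1}(y)$, in agreement with the placement of $e^{\pm iy}$ in the matrix.

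There is no genuine difficulty here. The only point requiring some care is the compatibility of the grouping $\mathcal{V}$ with the periodic Fourier transform $\cF$ used in the previous proposition, where even horizontal sites already sit in the $\Ket{+}$ channel and odd ones in the $\Ket{-}$ channel shifted by one vertical step. The main work is then the bookkeeping of the four residue classes modulo~$4$, which reproduces verbatim the banded pattern displayed in \eqref{matMQW}.
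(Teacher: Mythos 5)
Your proof is correct and is essentially the route the paper itself hints at (``This representation follows in a general way from the representation as Quantum Walk\ldots''): the paper's unitary $\cG$ factors precisely as $\cG = (\mathrm{id}_{L^2(\bS^1)}\otimes\mathcal{V})\circ\cF$, with your interleaving $\mathcal{V}(\Ket{j}\otimes\Ket{\pm})=\Ket{2j}$ resp.\ $\Ket{2j+1}$, so composing the previous proposition with the fiberwise $\mathcal{V}$ reproduces exactly the columns of \eqref{matMQW} as you claim. One small slip in your side remark: in $\cF$ it is the \emph{vertical} parity of a site that decides the $\Ket{\pm}$ channel, and it is the \emph{horizontal} index that is shifted by one for odd vertical sites -- but this does not enter your actual computation, which is correct as stated.
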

\begin{proof}This representation follows in a general way from the representation as Quantum Walk, see \cite{ABJ3}.
Explicitely, define the unitary
\[\cG:\ell^2\left(\bZ^2; \bC\right)\to L^2(\bS^1)\otimes\ell^2(\bZ;\bC)\]
by
\begin{equation}\label{relabel}
\cG\Ket{j,2k}:=e^{iky}\otimes\Ket{2j},\quad \cG\Ket{j,2k+1}:=e^{iky}\otimes\Ket{2j-1}.
\end{equation}

By the definition of $U_{CC}(\bbS)$, equations (\ref{def:UCC}) we have

\begin{align*}
&\begin{pmatrix} \cG U_{CC}\cG^{-1}e^{iky}\otimes\Ket{4j} \cr \cG U_{CC}\cG^{-1}e^{-iy}e^{iky}\otimes\Ket{4j+1}  \end{pmatrix} = S_{2j} \begin{pmatrix} e^{-iy}e^{iky}\otimes\Ket{4j-1} \cr e^{iky}\otimes\Ket{4j+2} \end{pmatrix},\notag \\
&\begin{pmatrix} \cG U_{CC}\cG^{-1}e^{iky}\otimes\Ket{4j+2} \cr \cG U_{CC}\cG^{-1} e^{iky}\otimes\Ket{4j+3} \end{pmatrix} = S_{2j+1} \begin{pmatrix} e^{iky}\otimes\Ket{4j+4} \cr e^{iky}\otimes\Ket{4j+1}  \end{pmatrix}
\end{align*}
which gives the result.
\end{proof}

\begin{rem}\label{phasem} Up to a unitary equivalence by a $y$ independent operator, we can assume the matrix elements of the operator $M_{QW}(y)$ satisfy
$$r_{2j+1}=|r_{2j+1}|, \ t_{2j}=i|t_{2j}| , \ \forall \ j\in\bZ.$$ 
\end{rem}
\begin{proof} Apply Lemma 3.2 in \cite{bhj}, which shows that we can choose the phases of those elements of the infinite five-diagonal unitary matrix at hand, modulo explicit unitary equivalence by a diagonal operator. \end{proof}

\begin{rem} Proposition \ref{defMQW} allows us to provide a direct  proof of the identity $wind (\det (M_{I}))=-1$, where $M_I(y)$ is the Fourier transform representation of the restriction $U_I$ corresponding to (\ref{matMQW}).
\end{rem}
\begin{proof}
The Fourier image $M_I(y)$ is a finite unitary matrix of the form (\ref{matMQW}), taking into account the finite width of the interface (\ref{interface}). Due to the relabelling (\ref{relabel}), and with $\evenover{n_R}=2p_R$, $\evenover{n_L}=2p_L$, the matrix $M_I(y)$ acts on 
$$l^2(\{4p_L-1, 4p_L, \cdots, 4p_R-1, 4p_R\})\simeq \bC^{2(2(p_R-p_L)+1)},$$
where $S_{2p_L-1}$ is off-diagonal and $S_{2p_R}$ is diagonal. Hence,
\begin{align*}
&M_I(y) |4p_L-1\rangle = \overline{t_{2p_L-1}}q_{2p_L-1}|4p_L\rangle\\
&M_I(y) |4p_R\rangle = e^{-iy}\overline{r_{2p_R}}q_{2p_L}|4p_R-1\rangle,
\end{align*}
and on the remaining $4(p_R-p_L)$ vectors, $M_I(y)$ has the block structure depicted in (\ref{matMQW}).
Lemma 3.2 in \cite{bhj} again allows us replace $t_{2j}$ by $e^{iy}t_{2j}$ in the matrix elements 
of $M_I(y)$, up to a unitary transform that depends (analytically) on $y$. Thus, for our spectral considerations, we can assume without loss that all rows carry the same phase factors, so that 
$$
M_I(y)=D(y)M_I(0), \ \mbox{where} \ D(y)=\mbox{\rm diag}(e^{-iy}, 1, 1, e^{iy}, e^{-iy}, 1 \cdots, 1, e^{iy}, e^{-iy}, 1 ).
$$
Due to the excess of phase $e^{-iy}$, the winding number of the $2\pi$-periodic analytic map $y\mapsto \det (M_I(y))=e^{-iy}\det (M_I(0))$ equals $-1$. \end{proof}

\newpage



\begin{thebibliography}{xxxxxxx}
\bibitem{ABJ1} Asch, J., Bourget, O., Joye, A.,  Localization Properties of the Chalker-Coddington Model. {\it Ann. H. Poincar\'e}, {\bf 11}, 1341â1373, (2010).
\bibitem{ABJ2} Asch, J., Bourget, O., Joye, A., Dynamical Localization of the Chalker-Coddington Model far from Transition, {\it J. Stat. Phys.}, {\bf 147}, 194-205 (2012).
\bibitem{ABJ3} Asch, J., Bourget, O., Joye, A., Spectral Stability of Unitary Network Models, 
{\it Rev. Math. Phys.}, {\bf 27}, 1530004, (2015).
\bibitem{atiyah} Atiyah, M. F., 
Algebraic topology and operators in Hilbert space. 1969 Lectures in Modern Analysis and Applications. I pp. 101--121 Springer, Berlin 
\bibitem{ASS} Avron, J., Seiler, R., Simon, B., {The Index of  a Pair of Projections}, {J. Func. Anal.}, {\bf 120}, 220-237, (1994) 
\bibitem{ASS1}Avron, J.E., Seiler, R., Simon, B., 1994. Charge deficiency, charge transport and comparison of dimensions. Communications in Mathematical Physics, 159(2), pp.399-- 422.

\bibitem{bhj} Bourget, O., Howland, J., Joye, A. Spectral analysis of unitary band matrices. {\it Commun. Math. Phys.}, {\bf 234}, 191--227, (2003).
\bibitem{brezis}Brezis, H., 2006. New questions related to the topological degree. In The unity of mathematics. Progr. Math. Birkh\"auser Boston, Boston, MA, pp. 137-154. 
\bibitem{bs}Birman M.S. , Solomyak M.Z., Spectral Theory of operators in Hilbert space (Chapter 11, Theorem 5),   Reidel (1987)
\bibitem{cetal} Cedzich, C., Geib, T., Gr\"unbaum, F. A., Stahl, C., Vel\'azquez, L., Werner, A. H., Werner, R. F., The topological classification of one-dimensional symmetric quantum walks. arXiv:1611.04439, (2016)
\bibitem{cc} Chalker, J. T., Coddington, P. D. Percolation, quantum tunnelling and the integer Hall effect. {\it J. Phys. C: Solid State Physics}, {\bf 21}, 2665, (1988). 
 (2017).   95(20), 205413. http://doi.org/10.1103/PhysRevB.93.115429
\bibitem{dft} Delplace, P., Fruchart, M.,  Tauber, C. Phase rotation symmetry and the topology of oriented scattering networks. {\it Phys. Rev. B}, {\bf 95}, 205413, (2017). 
\bibitem{elbaugraf}Elbau, P.,  Graf, G.M., 2002. Equality of bulk and edge Hall conductance revisited. Communications in Mathematical Physics, 229(3), pp.415--432.
\bibitem{F} Farid Ahamada, "Etude d'un fibr\'e en droites provenant d'un mod\`ele de dynamique quantique", Master thesis, Toulon, (2013)

\bibitem{dgr}Dombrowski, N. Germinet, F., Raikov, G., Quantization of edge currents along magnetic barriers and magnetic guides. {\it Ann. Henri Poincar\'e} {\bf 12} (2011)

\bibitem{grafPorta} Graf, G.M., Porta, M., Bulk-Edge Correspondence for Two-Dimensional Topological Insulators. Communications in Mathematical Physics (1965-1997), 324(3), pp.851--895. (2013).
\bibitem{graftauber}Graf, G. M.,  Tauber, C. Bulk-Edge correspondence for two-dimensional Floquet topological insulators. arXiv:1707.090212

\bibitem{iwatsuka} Iwatsuka, A., Examples of absolutely continuous Schr\"odinger operators in magnetic fields, {\it Publ. Res. Inst. Math. Sci.}, {\bf 21}, 385--401, 1985.


\bibitem{k} T. Kato, {\it Perturbation theory of linear operators}, Springer Verlag, 1976.
\bibitem{ki} Kitaev, A. Anyons in an exactly solved model and beyond. {\it Ann. of Phys.}, {\bf 321}, 2–111, (2006).

\bibitem{kbrd} Kitagawa, T., Berg, E., Rudner, M., Demler, E.  Topological characterization of periodically driven quantum systems. {\it Phys. Rev. B}, {\bf 82} 235114.


\bibitem{kok} Kramer, B., Ohtsuki, T., Kettemann, S., Random network models and quantum phase transitions in two dimensions, {\it Physics Reports}, {\bf 417}, 211, (2005). 

\bibitem{mmp} Macris N., Martin P.A., Pul\'e J.V., On edge states in semi-infinite quantum Hall systems, {\it J. Phys. A}, {\bf 32}, 1985-- 1996, 1998.

\bibitem{rs} M. Reed, B. Simon, {\it Methods of Modern Mathematical Physics}, Vol.1-4, Academic Press, 1975-1979.
\bibitem{rlbl} 
Rudner M.S,  Lindner  N. H., Berg E.,  Levin M.,  Anomalous edge states and the bulk-edge correspondence for periodically driven two-dimensional systems. {\it Phys. Rev. X} {\bf 3}, 031005 (2013).

\bibitem{SNF} Sz.-Nagy, B., Foias, C., Berkovici, H., K\'erchy, J., {\it Harmonic Analysis of Operators in Hilbert Spaces}, Springer (2010).
\bibitem{ssb} Sadel, C., Schulz-Baldes, H. Topological boundary invariants for Floquet systems and quantum walks. arXiv:1708.01173
\bibitem{vonNeumann}Neumann, von, J., Allgemeine Eigenwerttheorie Hermitescher Funktionaloperatoren. Mathematische Annalen, 102(1), pp.49--131, (1930).



\end{thebibliography}
\end{document}